\DeclareFontFamily{OMX}{MnSymbolE}{}
\DeclareSymbolFont{MnLargeSymbols}{OMX}{MnSymbolE}{m}{n}
\DeclareFontShape{OMX}{MnSymbolE}{m}{n}{
    <-6>  MnSymbolE5
   <6-7>  MnSymbolE6
   <7-8>  MnSymbolE7
   <8-9>  MnSymbolE8
   <9-10> MnSymbolE9
  <10-12> MnSymbolE10
  <12->   MnSymbolE12
}{}
\DeclareFontShape{OMX}{MnSymbolE}{b}{n}{
    <-6>  MnSymbolE-Bold5
   <6-7>  MnSymbolE-Bold6
   <7-8>  MnSymbolE-Bold7
   <8-9>  MnSymbolE-Bold8
   <9-10> MnSymbolE-Bold9
  <10-12> MnSymbolE-Bold10
  <12->   MnSymbolE-Bold12
}{}
\let\llangle\@undefined
\let\rrangle\@undefined
\DeclareMathDelimiter{\llangle}{\mathopen}%
                     {MnLargeSymbols}{'164}{MnLargeSymbols}{'164}
\DeclareMathDelimiter{\rrangle}{\mathclose}%
                     {MnLargeSymbols}{'171}{MnLargeSymbols}{'171}
\renewcommand{\vec}[1]{\bm{#1}}
\let\originalleft\left
\let\originalright\right
\renewcommand{\left}{\mathopen{}\mathclose\bgroup\originalleft}
\renewcommand{\right}{\aftergroup\egroup\originalright}
\theoremstyle{plain}
\newtheorem{thm}{\protect\theoremname}[section]
\theoremstyle{definition}
\newtheorem{example}[thm]{\protect\examplename}
\theoremstyle{plain}
\newtheorem{lem}[thm]{\protect\lemmaname}
\theoremstyle{plain}
\newtheorem{prop}[thm]{\protect\propositionname}
\theoremstyle{remark}
\newtheorem{rem}[thm]{\protect\remarkname}
\theoremstyle{definition}
\newtheorem{defn}[thm]{\protect\definitionname}
\theoremstyle{plain}
\newtheorem{conj}[thm]{\protect\conjecturename}
\providecommand{\examplename}{Example}
\providecommand{\lemmaname}{Lemma}
\providecommand{\propositionname}{Proposition}
\providecommand{\theoremname}{Theorem}
\providecommand{\conjecturename}{Conjecture}
\providecommand{\remarkname}{Remark}
\providecommand{\definitionname}{Definition}
\begin{document}
\title{Classification of Fragile Topology Enabled by Matrix Homotopy}

\author{Ki Young Lee}
\email[Email: ]{kiylee@sandia.gov}
\affiliation{Center for Integrated Nanotechnologies, Sandia National Laboratories, Albuquerque, New Mexico 87185, USA}

\author{Stephan Wong}
\affiliation{Center for Integrated Nanotechnologies, Sandia National Laboratories, Albuquerque, New Mexico 87185, USA}
 
\author{Sachin Vaidya}
\affiliation{Department of Physics, Massachusetts Institute of Technology, Cambridge, Massachusetts 02139, USA}

\author{Terry A. Loring}
\affiliation{Department of Mathematics and Statistics, University of New Mexico, Albuquerque, New Mexico, 87131, USA}

\author{Alexander Cerjan}
\email[Email: ]{awcerja@sandia.gov}
\affiliation{Center for Integrated Nanotechnologies, Sandia National Laboratories, Albuquerque, New Mexico 87185, USA}

\date{\today} 

\begin{abstract}

The moir\'{e} flat bands in twisted bilayer graphene have attracted considerable attention not only because of the emergence of correlated phases but also due to their nontrivial topology.
Specifically, they exhibit a new class of topology that can be nullified by the addition of trivial bands, termed fragile topology, which suggests the need for an expansion of existing classification schemes.
Here, we develop a $\mathbb{Z}_2$ energy-resolved topological marker for classifying fragile phases using a system's position-space description, enabling the direct classification of finite, disordered, and aperiodic materials.
By translating the physical symmetries protecting the system's fragile topological phase into matrix symmetries of the system's Hamiltonian and position operators, we use matrix homotopy to construct our topological marker while simultaneously yielding a quantitative measure of topological robustness.
We show our framework's effectiveness using a $C_2\mathcal{T}$-symmetric twisted bilayer graphene model and photonic crystal as a continuum example.
We have found that fragile topology can persist both under strong disorder and in heterostructures lacking a bulk spectral gap, and even an example of disorder-induced re-entrant topology.
Overall, the proposed scheme serves as an effective tool for elucidating aspects of fragile topology, offering guidance for potential applications across a variety of experimental platforms from topological photonics to correlated phases in materials.
\end{abstract}

\maketitle

Since their discovery, robust localized states have symbolized topological phases of matter and served as pivotal physical quantities that broaden the frontiers of functional materials. Representative examples include Chern insulators~\cite{Haldane1988} and topological insulators~\cite{Kane2005}, which host gapless edge states protected by associated topological invariants and persisting as long as their internal symmetries are preserved and the corresponding spectral gap remains open. Such classes of topology are now referred to as strong and exhibit a bulk-boundary correspondence, linking the presence of protected boundary states to the nonzero topological indices of the bulk bands. Within a $K$-theoretic framework~\cite{Kitaev2009, Schnyder2008, Ryu2010}, topological classification schemes rooted in vector bundles can systematically assign topological indices to band gaps by summing up the topological invariants of each occupied band. Naturally, these strong topological phases are also stable in the sense that the addition of a trivial occupied band does not change their behavior, as the additional trivial topological invariant does not alter the overall indices of the system. 

Recently, the emergence of crystalline-symmetry protected topological phases~\cite{Fu2011,Kruthoff2017,Benalcazar2017} has revealed a richer variety of phenomena such as weaker forms of topology. Distinct from strong topology, crystalline topology often lacks a general bulk-boundary correspondence~\cite{Zhu2023,Chen2023}, stimulating the development of new theoretical frameworks such as topological quantum chemistry~\cite{Bradlyn2017}, symmetry indicators~\cite{Po2017, benalcazar2019quantization}, and Wilson loop approaches~\cite{Bouhon2019,Wang2019} for classifying these subtle phases. Among these weak topological phases, the concept of fragile topology~\cite{Po2018} has been found as a contrasting case to stable topology, where the topological features of certain bands can be trivialized through the addition of trivial atomic bands. Importantly, the moir{\' e} flat bands of small-angle twisted bilayer graphene (TBG)~\cite{Bistritzer2011,Ahn2019,Song2019,Hwang2019,Po2019,Bradlyn2019,Zou2018,Calugaru2022} have been identified as possessing such fragile topology, attracting significant attention due to potential connections to correlated  phases~\cite{Cao2018,Else2019,Liu2019,Turner2022} and superconductivity~\cite{Cao2018a,Peotta2015,Xie2020,Po2018a,Peri2021}. This new class of topology also appears relevant to debates regarding the weak topological protection~\cite{Rosiek2023,Xu2023,Ding2024,Palmer2021} of the interface states observed in various photonic crystal platforms, as well as to potential applications of corner states and some types of edge states~\cite{Peri2020,Wu2024,Manes2020,Li2020,Vaidya2023,Wang2024,Ghorashi2024,DePaz2019,schulz2021topological}. However, most studies of fragile topology have focused on momentum-space classifications~\cite{Po2018,Ahn2019,Hwang2019,song2020,Song2020a,Lian2020,Lange2023}, which have exhibited limitations in both developing invariants that identify fragile phases, as well as quantitatively analyzing experimentally relevant scenarios such as finite-size effects, disorder, and heterostructures. Moreover, the ability to classify aperiodic materials is crucial for the continued exploration of twisted materials.


Here, we derive and demonstrate a position-space approach for classifying fragile topology and find an associated measure of robustness.  
Specifically, we show how the physical symmetries that protect fragile topology can give rise to matrix symmetries in a system's Hamiltonian and position operators when expressed in an atypical basis. These matrix symmetries can then be used to define a homotopic invariant that distinguishes 2D systems based on which atomic limit they can be continued to, yielding an energy-resolved $\mathbb{Z}_2$ topological marker.
Moreover, this approach implicitly introduces a quantitative measure of topological protection that remains valid under a variety of conditions, including finite system size, disorder, and environmental perturbations. We apply our approach to a disordered $C_2 \mathcal{T}$-symmetric TBG model and a two-dimensional (2D) photonic crystal embedded in an air background, showing that fragile topology can persist even under strong disorder and when a heterostructure lacks a bulk spectral gap. Our disordered simulations also reveal a disorder-induced re-entrant transition~\cite{Rieder2013,Zhang2025} to a fragile phase with increasing disorder strength.
Altogether, our work provides guidance for identifying and characterizing fragile topology beyond momentum-space descriptions, potentially offering insights into its relevance in aperiodic and moir\'{e} systems, photonic or metamaterial applications, correlated physics, and superconductivity.

We start by deriving an energy-resolved $\mathbb{Z}_2$ marker that classifies fragile topology in finite $C_2\mathcal{T}$-symmetric systems.
The key difference of our approach is that rather than focusing on a system's eigenstates, as are used in standard classification methods for fragile topology such as Wilson loops \cite{cano2018topology, Bouhon2019, Hwang2019, Bradlyn2019, de2019tutorial, DePaz2019, Vaidya2023}, our framework is instead rooted in the system's operators directly, such that these eigenstates never need to be determined.
By definition, the Hamiltonian of a $C_2\mathcal{T}$-symmetric 2D system obeys
\begin{equation}
    \left(C_2 \mathcal{T}\right)^{-1} H \left(C_2 \mathcal{T}\right) = H,
\end{equation}
while the system's position operators $X$ and $Y$ anti-commute with this symmetry due to $C_2$,
\begin{align}
    \left(C_2 \mathcal{T}\right)^{-1} X \left(C_2 \mathcal{T}\right) & = -X, \
    \left(C_2 \mathcal{T}\right)^{-1} Y \left(C_2 \mathcal{T}\right) & = -Y.
\end{align}
This suggests that we can define a transpose-like matrix operation $\rho$ as
\begin{equation}
    M^\rho = \left(C_2 \mathcal{T}\right)^{-1} M^\dagger \left(C_2 \mathcal{T}\right),
\end{equation}
which can be simplified using standard techniques (see Supplementary Sec.~SI.B) to
\begin{equation}
    M^\rho = C_2 M^\top C_2. \label{eq:rhoDef}
\end{equation}
In other words, this new operation is the transpose intertwined with a rotation, and associates the system's physical $C_2 \mathcal{T}$-symmetry to a set of mathematical matrix symmetries for its operators as
\begin{equation}
   X^\rho = -X, \quad    
    Y^\rho = -Y  \enspace\mbox{and}\enspace  
      H^\rho = H . \label{eq:XYHrho}
\end{equation}

The relations in Eq.~\eqref{eq:XYHrho} are reminiscent of the system's operators being symmetric $M^\top = M$, or skew-symmetric $M^\top = -M$, which, in general, are useful properties for studying matrix homotopy. For example, two invertible Hermitian skew-symmetric matrices $M_0$ and $M_1$ are homotopy equivalent and can be connected via a path of invertible Hermitian skew-symmetric matrices $M_t$ with $t \in [0,1]$ if and only if $\textrm{sign}[\textrm{Pf}(M_0)] = \textrm{sign}[\textrm{Pf}(M_1)]$, where $\textrm{Pf}$ denotes the Pfaffian. This is because the Pfaffian can only change sign when two of eigenvalues of some $M_t$ simultaneously cross $0$ (see Remark SI.6 and \cite[\S 3.9]{Serre2010Matrices}). To take advantage of these prior results on matrix homotopy, we use the unitary matrix
\begin{equation}
    W = \frac{1}{\sqrt{2}}\left(C_2+i\mathbf{1}\right)
\end{equation}
to find a basis in which $\rho$ is recast as $\top$, such that
\begin{equation}
    W M^\rho W^\dagger = \left(W M W^\dagger \right)^\top,
\end{equation}
see Supplementary Sec.~SI.C. Specifically, this means that $W X W^\dagger$ and $W Y W^\dagger$ are Hermitian skew-symmetric, while $W H W^\dagger$ is Hermitian symmetric.

To develop an invariant to classify fragile topology protected by $C_2 \mathcal{T}$-symmetry, we combine the system's operators centered about a choice of $(x,y,E)$ in position-energy space using the Pauli matrices $\sigma_{x,y,z}$
\begin{align}
\label{eq:loc}
\begin{split}
L&_{(x,y,E)}(WXW^\dagger,WYW^\dagger,WHW^\dagger) \\
&= \kappa(WXW^\dagger-x\mathbf{1})\otimes\sigma_x+ \kappa(WYW^\dagger-y\mathbf{1})\otimes\sigma_z \\
&\quad + (WHW^\dagger-E\mathbf{1})\otimes\sigma_y,  
\end{split}
\end{align}
yielding a spectral localizer \cite{Loring2015,Loring2017,Loring2020,Cerjan2024a}. Here, $\kappa$ is a scaling coefficient that sets the spectral weight of the position operators relative to the Hamiltonian. In spectrally gapped systems, $\kappa$ is typically on the order of $\kappa \sim E_{\textrm{gap}} / L_{\textrm{min}}$, where $E_{\textrm{gap}}$ is the width of the gap and $L_{\textrm{min}}$ is the minimum length of the system in any direction \cite{Loring2020}. Heuristically, the use of the Pauli matrices is preserving the independence of the information carried in $X$, $Y$, and $H$, in an analogous manner to how the Pauli matrices (along with $\mathbf{1}$) form a complete basis for $2$-by-$2$ Hermitian matrices.

\begin{figure}[t]
\center
\includegraphics[width=\columnwidth]{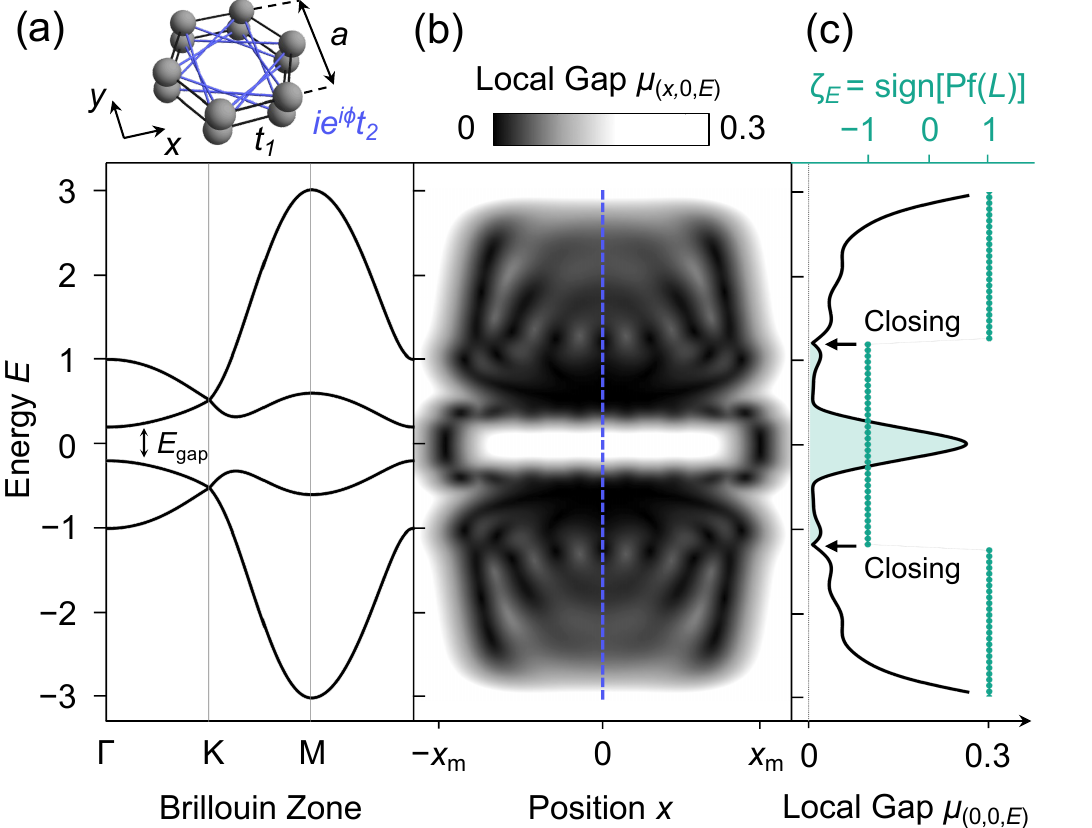}
\caption{ 
(a) Bulk band structure for the $C_2\mathcal{T}$-symmetric TBG model. Inset diagrams this lattice model. 
(b) The local gap $\mu_{(x,y,E)}$ for $x$ and $E$ at $y = 0$. Position $x$ is scaled in terms of the lattice constant $a$ and $x_m$ denotes the length of the edge from the origin. 
(c) $\mu_{(x,y,E)}$ for $E$ at the center of rotation denoted by blue dotted line in panel (b). The energy-resolved invariant $\zeta_E$ is shown by the green dots, where $\kappa = 0.1 t/a$ for all calculations.
}
\label{fig:fig1}
\end{figure}

The key features of Eq.~\eqref{eq:loc} are that it is Hermitian for any $(x,y,E)$, and at $x=y=0$ (i.e., the center of rotation), $L_{(0,0,E)}$ is skew-symmetric. This skew-symmetry has been achieved by placing $W X W^\dagger$ and $W Y W^\dagger$ against $\sigma_x$ and $\sigma_z$, and placing $W H W^\dagger$ against $\sigma_y$, such that each tensor product is between a symmetric matrix and a skew-symmetric matrix. Thus, the energy-resolved invariant
\begin{multline}
\label{eq:inv}
\zeta_E(X,Y,H) \\
= \textrm{sign}[\textrm{Pf}(L_{(0,0,E)}(WXW^\dagger,WYW^\dagger,WHW^\dagger))]
\end{multline}
distinguishes systems that can be connected to each other while preserving $C_2 \mathcal{T}$-symmetry and maintaining a positive local gap
\begin{multline}
\label{eq:mu}
\mu_{(x,y,E)}(X,Y,H) = \text{min}(|\text{spec}[L_{(x,y,E)}(X,Y,H)]|), 
\end{multline}
where $\textrm{spec}[M]$ is the spectrum of $M$.
By definition, $\zeta_E\in \{-1,+1\} \cong \mathbb{Z}_2$; if $\zeta_E= -1$, the system at $E$ cannot be connected to a trivial atomic limit, and vise versa. Moreover, since $\zeta_E$ cannot change its value without $\mu_{(0,0,E)} \rightarrow 0$, either by changing the choice of $E$ or by perturbing the system, the local gap serves as a quantitative measure of the system's topological protection at $E$. Specifically, when perturbing the system $H \rightarrow H + \delta H$, $\zeta_E$ is guaranteed by Weyl's inequality to be preserved so long as $\Vert \delta H \Vert < \mu_{(0,0,E)}(X,Y,H)$ \cite{weyl_asymptotische_1912,Bhatia1997}. In addition, as locations where $\mu_{(x,y,E)}=0$ are associated with the locations of a system's states \cite{cerjan_V_L2023quadraticPS}, changes in a system's topological marker necessarily imply changes in the structure of its states. A detailed mathematical discussion of Eq.~\eqref{eq:inv}, its essential properties, and its relation to atomic limits is given in Supplementary Secs.~SI.E-G. In particular, Examples SI.11 and SI.12 show the form of the two different classes of atomic limits distinguished by $\zeta_E$.

Summarizing our derivation, we first used a system's physical symmetries to define a transpose-like matrix operation that translates the physical symmetries to matrix symmetries of the system's $H$, $X$, and $Y$. Then, we found a change of basis that transformed this matrix operation into the standard matrix transpose, such that in this atypical basis the system's operators were either symmetric or skew-symmetric. Finally, by tensoring these operators using the Pauli matrices, we formed a single skew-symmetric matrix whose Pfaffian's sign discriminates between which atomic limits a given system can be connected to without closing the system's local gap. Altogether, by using results from matrix homotopy, this argument yields an energy-resolved invariant for classifying fragile topology as well as a quantitative measure of topological protection.

Having derived a classification framework applicable to finite systems, we demonstrate its use in a four-band model that is a low-energy approximation of TBG and exhibits fragile topology~\cite{Ahn2019,Po2019}. This model consists of a bilayer honeycomb lattice, as schematically shown in Fig.~\ref{fig:fig1}(a), where $t_1$ and $t_2$ represent the intra- and inter-layer hopping amplitudes, respectively. The blue lines spirally connecting inter-layer sites represent next-nearest neighbor (NNN) hoppings with the hopping phase $\pm \phi$, which can induce a nontrivial fragile band gap. We have provided the expressions of the Hamiltonian in both position and momentum space the End Matter and further information in Supplementary Sec.~SII. 

Comparison of the bulk band structure of the infinite four-band TBG model with the local gap of a finite system confirms that the locations in $(x,y,E)$-space with $\mu_{(x,y,E)} \approx 0$ indicate the presence of states at the specified energies and positions, see Figs.~\ref{fig:fig1}(a),(b). For choices of $E$ residing within the spectral extent of the bulk bands, extended Bloch states are distributed throughout the system, whereas within the bulk band gap, only localized states exist at the system's boundaries. Note that the fluctuation pattern of $\mu_{(x,0,E)}$ only intermittently touching zero near the band gap inherently suggests the weak topological nature of this fragile system; strong topological phases instead exhibit a spheroid of appropriate dimension where $\mu_{(\mathbf{x},E)} = 0$ (see Supplementary Sec.~SI.G).

Within the bulk band gap, the energy-resolved marker $\zeta_E$ confirms the finite system's fragile topology, while the large local gap at the rotation center indicates this phase's strong topological protection, see Fig.~\ref{fig:fig1}(c). For energies above and below the bulk gap, $\zeta_E$ maintains a nontrivial value of $-1$ until the first closing points of $\mu_{(0,0,E)}$, beyond which it switches to $+1$. Although the exact energy where $\mu_{(0,0,E)} =0$ may vary with the parameter $\kappa$ (see Supplementary Sec.~SIII), these spectral regions with small local gaps are not topologically robust, as very small system perturbations are able to change the topology. 

To confirm that the local fragile marker $\zeta_E$ captures phase transitions, we uniformly vary $\phi$ between all NNN sites from $-\pi$ to $\pi$. As can be seen in Fig.~\ref{fig:fig2}(a), the width of the bulk spectral gap under this alteration is symmetric about $\phi=0$ and touches zero twice at $\phi=\pi /3$ and $\phi=2\pi /3$. Similarly, the local gap closes at precisely the same points where $E_{\textrm{gap}}=0$ and $\zeta_E$ changes across these values of $\phi$, indicating a change in the material's fragile topological phase.

\begin{figure}[t]
\center
\includegraphics[width=\columnwidth]{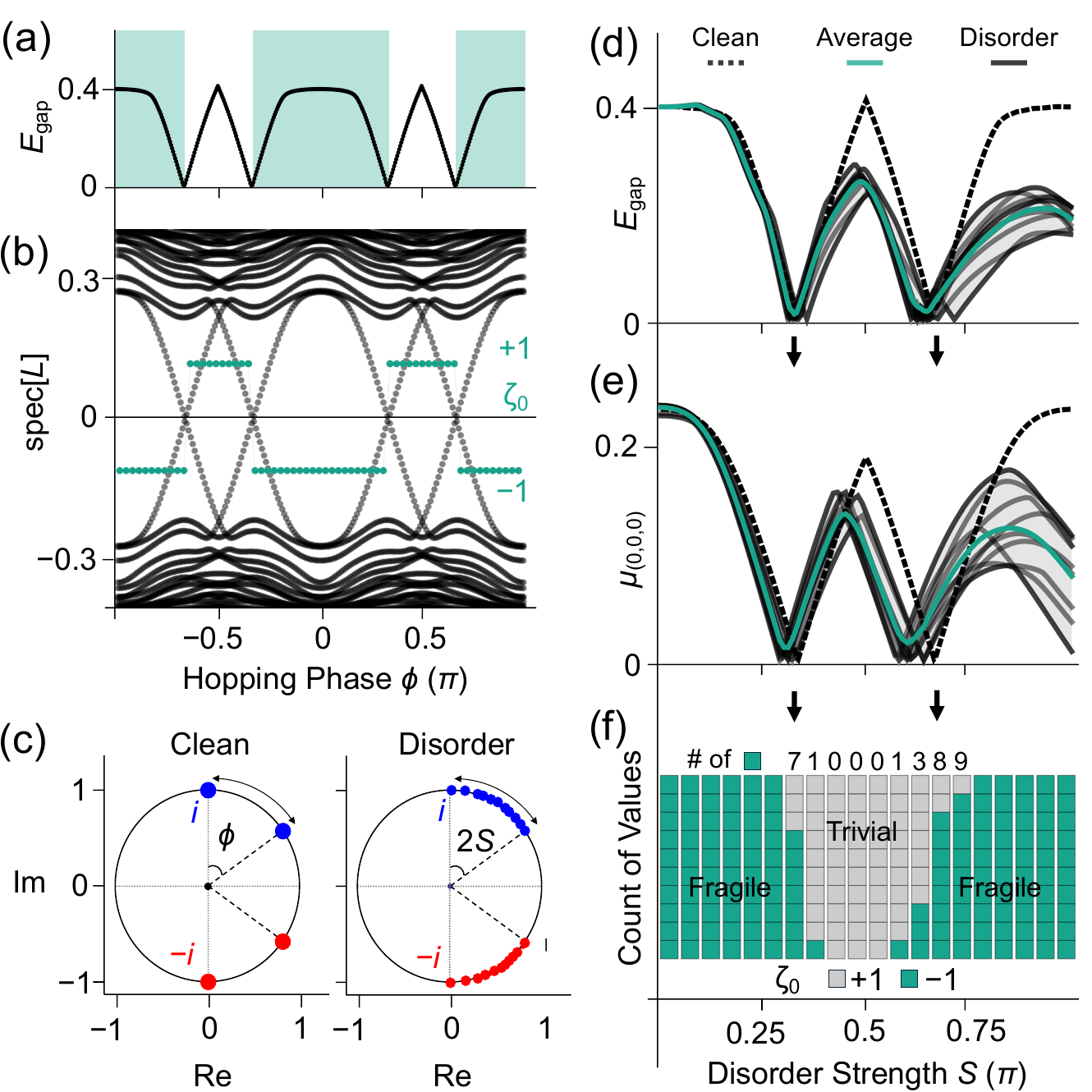}
\caption{
(a) Closings of $E_{\textrm{gap}}$ due to uniform changes in $\phi$ in the periodic $C_2\mathcal{T}$-symmetric TBG model. 
(b) Spectrum of $L_{(0,0,0)}$ (gray) and $\zeta_0$ (green) at the middle of the bulk band gap as $\phi$ is uniformly varied in the finite TBG system.
(c) Hopping phases of clean and disordered systems depicted on the unit circle. The red and blue dots correspond to the opposite off-diagonal terms of the Hamiltonian, forming conjugate pairs. 
(d), (e), (f) Ensemble analysis of $E_{\textrm{gap}}$ (d), $\mu_{(0,0,0)}$ (e), and histogram of $\zeta_{0}$ for the 10 disorder configurations for increasing $S$ (f). In (d) and (e), the black dashed lines show the behavior for uniform hopping phase changes in the clean system. Solid Gray lines show the results for each disorder configuration, the solid green line shows the average over the ensemble, and the gray shading fills the area between the maximum and minimum of these data, representing the sample deviations. The histogram in (f) uses green and gray to indicate $\zeta_0=-1$ and $+1$, respectively.
}
\label{fig:fig2}
\end{figure}

\begin{figure*}[t]
\center
\includegraphics[width=2\columnwidth]{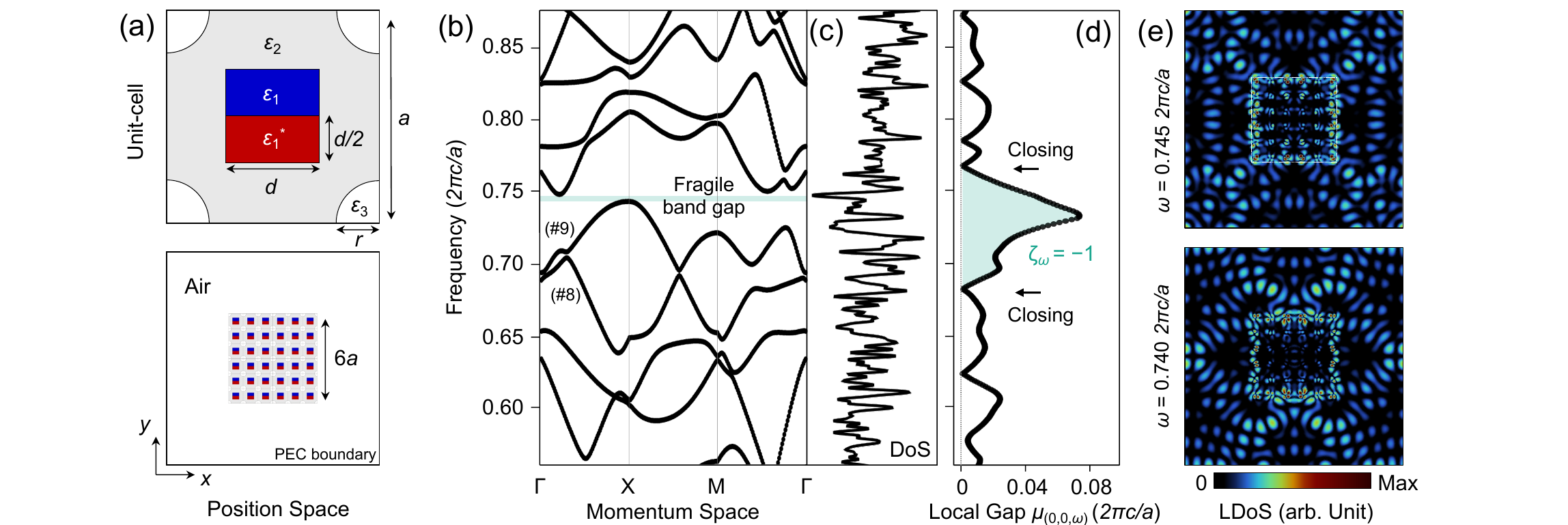}
\caption{(a) Schematics of a single unit cell of the photonic crystal (upper) and a heterostructure (lower) formed by $6\times6$ unit cells surrounded by air bounded by a perfect electrical conductor (PEC). Here, $\epsilon_1$ =\{16, 6$i$, 0; $-$6$i$, 16, 0; 0, 0, 16\}, $\epsilon_2=4$, $\epsilon_3=1$, $a$ is the lattice constant, $r = 0.2a$, and $d = 0.45a$.  
(b) Band structure of the bulk photonic crystal.
(c),(d) The density of states (DoS) (c) and the local gap and frequency-resolved index (d) for the $6\times6$ finite system from (a). Calculations in (d) use $\kappa$ = 0.01 $2\pi c/a^2$, where $c$ is the speed of light in a vacuum.
(e) Local density of states (LDoS) within the fragile band gap depicted by green in (b) at the two frequencies for the finite heterostructure shown in (a).
}
\label{fig:fig3}
\end{figure*}

As our framework works directly with a finite system expressed in position-space, it can inherently be applied to disordered and aperiodic systems. To illustrate this capability, we consider an ensemble of disordered variants of the four-band TBG model where the hopping phases $\phi_{jk}$ between each pair of NNN sites $j$ and $k$ are randomly assigned a value within an angle range of 2$S$ from $\pm i$ from a uniform distribution while preserving $C_2\mathcal{T}$-symmetry. Therefore, $S$ represents the median value of the disorder strength, allowing us to investigate the phase diagram of the disordered system based on this variable. In Figs.~\ref{fig:fig2}(d), (e), and (f), we present $E_{\textrm{gap}}$, $\mu_{(0,0,E)}$, and $\zeta_E$ as functions of $S$ for 10 different disorder realizations. We numerically observe that the disordered samples exhibit spectral gap closings and topological phase transitions near $S=\pi/3$ and $2\pi/3$, confirming our energy-resolved local marker's ability to classify disordered systems.

Moreover, the results in Figs.~\ref{fig:fig2}(d), (e), and (f) provide clear evidence of disorder-induced re-entrant topological phase transitions~\cite{Rieder2013, Zhang2025} in fragile topology, offering a novel perspective on the stability of fragile topological phases. These findings reveal that obstructions to connecting a system to a trivial atomic limit can reemerge beyond a certain disorder threshold rather than simply being destroyed. In particular, such behavior parallels phenomena observed in topological Anderson insulators~\cite{Li2009}, suggesting an intricate interplay between disorder and topology in moir{\' e} systems. Thus, the re-entrant topological transition emphasizes the limitations of momentum-space approaches and illustrates the possibilities for a topological marker rooted in a system's position-space description.

Finally, to show the broad applicability of our approach in quantifying material topology from tight-binding to continuum models, traditionally challenging due to high computational costs~\cite{Bianco2011,li2024}, we turn to classifying fragile topology in photonic crystals (PhC) surrounded by air~\cite{Dixon2023}. We consider the 2D PhC unit cell structure depicted in Fig.~\ref{fig:fig3}(a) that is designed to be $C_2\mathcal{T}$-symmetric while not exhibiting either symmetry in isolation. The full heterostructure consists of a region containing $6\times6$ unit cells surrounded by air. The system is discretized with standard finite-difference methods to define the Hamiltonian and position operators~\cite{Cerjan2022a,Dixon2023,Cerjan2024,Cerjan2024a} and we focus on its transverse electric (TE) modes with non-zero electromagnetic field components $(H_z, E_x, E_y)$.



For the infinite PhC system, the bulk band structure [Fig.~\ref{fig:fig3}(b)] possesses two bands that can be shown to exhibit fragile topology using Wilson loops (See Supplementary Sec.~SIV). However, the inclusion of the surrounding air region in the finite heterostructure removes any spectral gap from the system's DoS, as shown in Fig.~\ref{fig:fig3}(c). Nevertheless, the frequency-resolved marker $\zeta_\omega$ identifies this frequency range as possessing fragile topology despite the lack of a spectral gap, see Fig.~\ref{fig:fig3}(d). Moreover, although the bulk band gap above the fragile bands is small, the relatively large local gap indicates that the system's topology is more robust than would be suggested by the narrow width of the bulk band gap. The LDoS for the magnetic field intensity near the frequencies of the fragile band gap are shown in Fig.~\ref{fig:fig3}(e). At both frequencies inside the fragile band gap, the LDoS is localized at the boundaries of the PhC and has substantial support outside of the PhC. Therefore, these results confirm that our framework for classifying fragile topology can be applied to systems that lack a bulk spectral gap without alteration and that these systems can still exhibit topological robustness associated with the region responsible for the fragile topology.

In conclusion, we have introduced a position-space framework for classifying fragile topology rooted in matrix homotopy that distinguishes systems based on which atomic limits they can be continued to. Applying this framework to the TBG lattice model and PhC continuum model, we have shown the breadth of the local marker's ability to capture systems' fragile nature. Moreover, the versatility of our method is highlighted by its ability to identify nontrivial fragile phases under conditions of strong disorder and gapless environments. 
Taking disorder-induced re-entrant fragile topology, which is unpredictable by conventional methods, as an example, our approach highlights further opportunities for potential applications across various fields such as correlated physics, metamaterials, and topological photonics.

\section*{Acknowledgments}
K.Y.L.\ and S.W.\ acknowledge support from the Laboratory Directed Research and Development program at Sandia National Laboratories.
S.V.\ acknowledges support from the U.S.\ Office of Naval Research (ONR) Multidisciplinary University Research Initiative (MURI) under Grant No.\ N00014-20-1-2325 on Robust Photonic Materials with Higher-Order Topological Protection.
T.A.L.\ acknowledges support from the National Science Foundation, Grant No. DMS-2349959.
A.C.\ acknowledges support from the U.S.\ Department of Energy, Office of Basic Energy Sciences, Division of Materials Sciences and Engineering.
This work was performed in part at the Center for Integrated Nanotechnologies, an Office of Science User Facility operated for the U.S. Department of Energy (DOE) Office of Science.
Sandia National Laboratories is a multimission laboratory managed and operated by National Technology \& Engineering Solutions of Sandia, LLC, a wholly owned subsidiary of Honeywell International, Inc., for the U.S. DOE's National Nuclear Security Administration under Contract No. DE-NA-0003525. 
The views expressed in the article do not necessarily represent the views of the U.S. DOE or the United States Government.

\section*{End Matter}

Here, we provide detailed expressions of a four-band tight-binding model known to describe the essential physics of the nearly flat bands of TBG according to a low-energy continuum theory. As depicted in Fig.~\ref{fig:fig1}(a), the model consists of a honeycomb lattice with two layers. The explicit expression of the Hamiltonian for this lattice model in momentum space is given by
\begin{multline}
\label{eq:mo_H}
h(\vec{k}) = 
\hat{t}_1 \otimes \left[ \left( 1 + 2 \cos \frac{\sqrt{3} k_x a}{2} \cos \frac{k_y a}{2} \right) \sigma_x \right.  \\ 
\left. \quad + 2 \sin \frac{\sqrt{3} k_x a}{2} \cos \frac{k_y a}{2} \sigma_y \right]  
+ \hat{t}_2 \otimes \mathbf{1} \left[ n(\vec{k}, \phi)+n^*(\vec{k}, \phi) \right],       
\end{multline}
\noindent
with
\begin{align}
\label{eq:NNN}
\begin{split}
n(\vec{k}, \phi) = i e^{i\phi} \left(e^{i\mathbf{k}\cdot\mathbf{a}_1}+e^{-i\mathbf{k}\cdot\mathbf{a}_2}+e^{i\mathbf{k}\cdot\mathbf{a}_3}   \right) ,    
\end{split}
\end{align}
where we employ $\hat{t}_1 = 0.4 t\mathbf{1} + 0.6  t \tau_z$ and $\hat{t}_2 =  0.1  t \tau_x$, meaning intra- and inter-layer hopping amplitudes, respectively, as schematically shown by the black and green lines in Fig.~\ref{fig:fig1}(a). Here, $\mathit{t}$ indicates the overall energy scale, $\mathbf{k} = (k_x, k_y)$ is the in-plane momentum, and the primitive vectors are $\mathbf{a}_{1,2} = (\sqrt{3}, \pm 1)a/2$ and $\mathbf{a}_3=\mathbf{a}_1-\mathbf{a}_2$. The Pauli matrices $\sigma_{x,y,z}$ and $\tau_{x,y,z}$ represent the sublattice and orbital degrees of freedom, respectively. The blue lines, spirally connecting inter-layer sites, denote the next nearest neighbor hoppings that are crucial for inducing a nontrivial fragile band gap. We consider the additional inter-layer hopping phase to be $\phi = 0$ here so that the initial coefficient of inter-layer hopping is purely imaginary. 

In Fig.~\ref{fig:fig1}(a), we show the bulk band structure of the four-band model. All four energy bands are symmetric with respect to the Fermi level and each pair of bands exhibits Dirac points at each of the $K$ and $K'$ points throughout the Brillouin zone. These Dirac points are protected by space-time $C_2\mathcal{T}$ inversion symmetry~\cite{Ahn2019}, where $\mathcal{T}^2 = \mathbf{1}$ and $C_2$ denotes a twofold rotation about the $z$-axis.
In addition, the valence bands exhibit an obstruction that prohibits their representation by exponentially localized Wannier functions that obey the system's $C_2\mathcal{T}$-symmetry. As this obstruction disappears when more trivial bands are added to the model, this system exhibits fragile topology~\cite{Ahn2019,Po2019}. 

To study finite geometries and disorder in the four-band TBG model, we instead work with the system's expression in position-space, which is given by the Hamiltonian
\begin{equation}
\label{eq:po_H}
H = \sum_{\langle i,j \rangle} c_i^\dagger (\hat{t}_1)_{ij}c_j + \sum_{\langle\langle i,j \rangle\rangle} c_i^\dagger s_{ij}( ie^{i \phi}\hat{t}_2)_{ij} c_j, 
\end{equation}
where $s_{ij}=+1$ is chosen for $\mathbf{r}_i=\mathbf{r}_j+a\mathbf{y}$.

\section*{Supplementary Information}
The Supplementary Information provides a full treatment of the mathematics that underpin the results discussed in the main text, a discussion of the edge states that appear in the TBG model with open boundaries, an elaboration on the spectral weight factor $\kappa$ used in the spectral localizer framework, Wilson loop calculations for the 2D fragile photonic crystal, and references \cite{loringSor2013AC_time_reversal,loringSor2014AC_real_orthogonal,loringSor2016AC_real_symmetric,boersemaL2015_K-theory_by_symmetries,Gantmacher1959matrix_theory,Lax2007Matrices,herrera2024SymmetryBootstrap,EndersShulman2023AC_matrices_and_dimension,bratteli2012OperAlg_StatMec,cerjanLoring2024even_spheres,Boersema2020_K-theory_by_symmetriesII,loring2014quantitative}.



\renewcommand{\thesection}{S\Roman{section}}
\setcounter{section}{0}

\renewcommand{\thefigure}{S\arabic{figure}}
\setcounter{figure}{0}

\renewcommand{\theequation}{S\arabic{equation}}
\setcounter{equation}{0}

\section{Deriving the invariant for classifying fragile topology \label{sec:S1}}

The main text provides a derivation of a topological marker for classifying fragile topology using results from matrix homotopy. However, underpinning this entire derivation is a real $C^*$-algebra and associated techniques. Thus, here, as we provide further details on the argument given in the main text, we will also place these arguments into the context of the study of real $C^*$-algebras.

Broadly, a (complex) $C^*$-algebra is an associative algebra with a norm and an involution that acts as an adjoint operation. (An involution is an operation that is its own inverse.) For the purposes of this supplementary information, it suffices to simply consider the algebra formed by all of the $n$-by-$n$ matrices over the complex field (i.e., matrices whose entries can be complex numbers), $\mathbf{M}_{n}(\mathbb{C})$, with the usual operations like matrix multiplication, the conjugate transpose $M \mapsto M^\dagger$, in conjunction with the operator norm $\Vert \cdot \Vert$ that denotes the largest singular value of the matrix. This example of a $C^*$-algebra is particularly pertinent to physics, as it is the algebra that contains the physical observables on a finite Hilbert space of dimension $n$. (However, note that $\mathbf{M}_{n}(\mathbb{C})$ also contains matrices that are not Hermitian.) Regarding the naming of these algebras, the $*$ in $C^*$-algebra is $\dagger$ in physics, but we cannot call these $C^\dagger$-algebras so the nomenclature will clash with the notation. A real $C^*$-algebra possesses a second involution, $M\mapsto M^\rho$, that endows the algebra and its constituent elements with additional structure. For our purposes here of classifying fragile topology, this second involution is built from the symmetries that protect the system's fragile topology.

\subsection{Symmetries \label{sec:sym}}

Consider a square system with open boundary conditions. This means we have two position observables $X$ and $Y$ and Hamiltonian $H$.  If the system has sides of length $L$, then
$-L/2\leq X\leq L/2$ and  $-L/2\leq Y\leq L/2$ (i.e., all of the eigenvalues of $X$ and $Y$ fall within this range).
We will assume that $X$ and $Y$ commute as one assumes in material science that different positions observables are compatible. For simplicity, we now assume we have single-particle model on finite-dimensional Hilbert space $\mathfrak{H}$ and that this is a model of a closed system. Thus, we make the assumption that $H$ that is a Hermitian matrix $H = H^\dagger$. Finally, we expect some manner of locality, at least strong enough to imply $\| [H,X] \| \leq \delta$ and $\| [H,Y] \| \leq \delta$ for some $\delta > 0$ that seems small relative to the energy and length scales in the system (e.g., the lattice constants in the two directions and bulk spectral gap).

Fragile topology is generally associated with a system being symmetric with respect to the combination of a spatial symmetry and time-reversal. Here, we consider a time-reversal operator $\mathcal{T}$ that is antilinear and squares to $+I$ (spinless models) or to $-I$ (models incorporating spin). Moreover, we also want there to be also a unitary operator $C_2$ that implements rotation by 180$^\circ$ degrees. We will have $C_2^2 = I$ and assume $C_2$ is real (it is usually a permutation matrix). To keep the antilinear operators visually distinct from the familiar linear operators, we will use caligraphic font to denote such operators and proper composition notation $\mbox{-}\circ\mbox{-}$ where antiunitary operators are involved.

Altogether, we consider here models where $H$ has $C_2\circ\mathcal{T}$-symmetry but may lack either $C_2$-symmetry and/or $\mathcal{T}$-symmetry. Thus we introduce the composite operator
\begin{equation}
\mathcal{R} = C_2 \circ \mathcal{T}   ,  
\end{equation}
which will be an antiunitary symmetry, and our main assumption is
\begin{equation} 
    H \circ \mathcal{R} = \mathcal{R} \circ H. 
\end{equation}
Position and time should be independent, so we assume that $\mathcal{T}$ commutes with $X$ and $Y$.  As $C_2$ is rotation by a half-turn, we assume that $R$ anticommutes with both of these position operators. Thus we assume
\begin{equation}  \label{eqn:RT-symmetry-assumptions}
   \mathcal{R}  \circ X =-X \circ \mathcal{R} \enspace\mbox{and}\enspace  
    \mathcal{R}  \circ Y  = -Y \circ \mathcal{R}. 
\end{equation}

For the remainder, we will concentrate solely on the spinless case.  Further, we assume that time reversal is the standard choice of complex conjugation, i.e.,
\begin{equation}
\mathcal{T}(\mathbf{v}) = \overline{\mathbf{v}}    
\end{equation}
is a standing assumption from here on.  We will also use the notation $\mathcal{K}(\mathbf{v}) = \overline{\mathbf{v}}$ when convienient.

\subsection{Antiunitary symmetries and real $C^*$-algebras}

To understand the distance from a given system to an atomic limit,
we will need the mathematical techniques developed using real $C^*$-algebras over a series of papers from last decade 
\cite{loringSor2013AC_time_reversal,loringSor2014AC_real_orthogonal,loringSor2016AC_real_symmetric}. 
The relevant
real $C^*$-algebra here is the algebra of
all $2n$-by-$2n$ matrices $\mathbf{M}_{2n}=\mathbf{M}_{2n}(\mathbb{C})$, with an involution operation similar to the transpose that is built from $\mathcal{R}$.
We will use $\top$ denote transpose, so $M^\dagger = \overline{M^\top} = \overline{M}^\top$.  

To give $\mathbf{M}_{2n}$ a real structure, we define
a generalized involution $M\mapsto M^\rho$ by
\begin{equation*}
    M^\rho = \mathcal{R}^{-1} \circ M^\dagger \circ \mathcal{R}.
\end{equation*}
Since we have $\mathcal{R} \circ \mathcal{R} = I$ this simplifies to $M^\rho = \mathcal{R} \circ M^\dagger \circ \mathcal{R}$.
Recall that $\mathcal{R}=\mathcal{T}\circ C_2=C_2\circ\mathcal{T}$.  Since $\mathcal{T}$ denotes conjugation, we have
$\mathcal{R}(\mathbf{v})=C_2\overline{\mathbf{v}}$.
Therefore
\begin{align*}
\mathcal{R}\circ M^{\dagger}\circ\mathcal{R}(\mathbf{v}) & =\mathcal{R}\left(M^{\dagger}C_2\overline{\mathbf{v}}\right)\\
 & =C_2\overline{M^{\dagger}C_2\overline{\mathbf{v}}}\\
 & =C_2M^{\top}C_2\mathbf{v}
\end{align*}
and we find 
\begin{equation} \label{eqn:formula_for_rho}
M^{\rho}=C_2M^{\top}C_2.
\end{equation}
That is, this extra operation that creates a real structure for $\mathbf{M}_{2n}$ is the transpose intertwined with a
rotation.

\begin{rem}
Let us look at the very special case where the Hilbert space is $\mathbf{C}^2$, If there are two sites where rotation takes one location to the other, then $C_2$ is 
\begin{equation*}
C_2 = \left[\begin{array}{cc}
0 & 1\\
1 & 0
\end{array}\right].
\end{equation*}
The $\rho$ operation becomes
\begin{equation}
\left[\begin{array}{cc}
a & b\\
c & d
\end{array}\right]^{\rho}=\left[\begin{array}{cc}
d & b\\
c & a
\end{array}\right].
\end{equation}
Instead, if there is one site at the fixed point, say with two orbitals, then  $C_2$ will be
\begin{equation*}
C_2 = \left[\begin{array}{cc}
1 & 0\\
0 & 1
\end{array}\right],
\end{equation*}
and the $\rho$ operation becomes the standard transpose
\begin{equation}
\left[\begin{array}{cc}
a & b\\
c & d
\end{array}\right]^{\rho}=\left[\begin{array}{cc}
a & c\\
b & d
\end{array}\right].
\end{equation}
It is easy to forget how these behave differently and that the second case shows up as a sub-system whenever the full system has a site at the center of rotation.
\end{rem}

One axiom for real $C^*$-algebras is that the generalized involution, here $\rho$, needs to commute with the adjoint.  We know $C_2$ is Hermitian, so from Eq.~\eqref{eqn:formula_for_rho} we derive
\begin{equation*}
   \left( M^\rho \right)^\dagger  
   =  C_2\left(M^\top \right)^\dagger C_2 
    =  C_2\left(M^\dagger \right)^\top C_2  
    =  \left(M^\dagger \right)^\rho .   
\end{equation*}

Considering again a 2D system with Hamiltonian $H$ and position operators $X$ and $Y$, in real $C^*$-algebra terminology, we are assuming we have three Hermitian matrices, with $X$ and $Y$ commuting and $H$ almost commuting with the other two, with the symmetry assumptions
\begin{equation}
   X^\rho = -X, \quad    
    Y^\rho = -Y  \enspace\mbox{and}\enspace  
      H^\rho = H . 
\end{equation}
To clarify the symmetry on $H$, we have
\begin{equation*}
    H=\mathcal{R}\circ H\circ\mathcal{R}\implies H=\left(H^{\dagger}\right)^{\rho}\implies H=H^{\rho} 
\end{equation*}
since $H$ is assumed to be Hermitian.

\begin{defn}
Suppose $\mathcal{S}$ is an antiunitary symmetry on a Hilbert space $\mathfrak{H}$. An triple of  operators $(X,Y,H)$ on $\mathfrak{H}$ is said to have
$\mathcal{S}^{(-,-,+)}$-\emph{symmetry} if they are all Hermitian and
\begin{equation*}  
   \mathcal{S}  \circ X =-X \circ \mathcal{S}, \     
    \mathcal{S}  \circ Y  = -Y \circ \mathcal{S}  \ \mbox{and}\   
      \mathcal{S}  \circ H = H \circ \mathcal{S} . 
\end{equation*}
\end{defn}

\subsection{Converting from a standard physics setting to a standard math setting} 
\label{subsec:fixed_change_of_basis}

In a basis-free sense, any antiunitary that squares to $+I$ is equivalent to any other. Computer modeling of a physical system, however, needs to be done in
a fixed basis. To avoid decimating our physical intuition, we want to select those basis vectors so that they correspond to a single location.  To be able to utilize centuries of work in linear algebra, we want to work in a basis where we can work with the standard transpose
and not $M\mapsto M^\rho$.

For our computer algorithms, we will need an explicit unitary so that conjugation by that unitary converts $M\mapsto M^\top$ back to $M\mapsto M^\rho$.
This is essentially as described in Section 2 of \cite{boersemaL2015_K-theory_by_symmetries}, but this is an easy calculation so we include it here for completeness.

\begin{lem}
Suppose $R$ is real, unitary and $R^{2}=I$. Define
\begin{equation*}
M^{\tau}=RM^{\top}R.
\end{equation*}
Let 
\begin{equation*}
W=\frac{1}{\sqrt{2}}\left(R+iI\right).
\end{equation*}
This is unitary, and for any matrix $M$ we have
\begin{equation*} 
\left(WMW^{\dagger}\right)^{\top}=WM^{\tau}W^{\dagger}.
\end{equation*}
\end{lem}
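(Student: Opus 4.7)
The plan is to verify both claims by direct computation, leveraging the fact that the hypotheses on $R$ force it to be symmetric: from real unitarity we have $R^{\top} = R^{-1}$, and combined with $R^{2} = I$ this gives $R^{\top} = R$. Thus $R$ is in fact Hermitian, and hence $W^{\dagger} = \frac{1}{\sqrt 2}(R - iI)$. Checking unitarity is then a one-line expansion: $WW^{\dagger} = \frac{1}{2}(R+iI)(R-iI) = \frac{1}{2}(R^{2}+I) = I$, with the cross terms in $iR$ cancelling; $W^{\dagger}W = I$ is symmetric.

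For the main intertwining identity, my strategic observation is that $W$ itself is symmetric in the ordinary sense, because $W^{\top} = \frac{1}{\sqrt 2}(R^{\top}+iI) = W$, and likewise $(W^{\dagger})^{\top} = W^{\dagger}$. This immediately simplifies the left-hand side:
\[
\bigl(WMW^{\dagger}\bigr)^{\top} = (W^{\dagger})^{\top} M^{\top} W^{\top} = W^{\dagger} M^{\top} W .
\]
The lemma is therefore equivalent to the $M$-independent assertion $WRM^{\top}RW^{\dagger} = W^{\dagger}M^{\top}W$, which in turn reduces to proving the single operator identity $WR = i W^{\dagger}$; taking its adjoint gives $RW^{\dagger} = -iW$, and the two factors of $\pm i$ collapse against each other when inserted around $M^{\top}$.

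The one line of real content is the verification $WR = iW^{\dagger}$, which I would obtain by direct substitution: $WR = \tfrac{1}{\sqrt 2}(R+iI)R = \tfrac{1}{\sqrt 2}(R^{2}+iR) = \tfrac{1}{\sqrt 2}(I+iR) = iW^{\dagger}$, using only $R^{2}=I$. I do not expect any genuine obstacle here; the only thing to keep track of is the phase $i$ and its sign under the adjoint, along with the (perhaps easy-to-miss) consequence that $R^{2}=I$ plus real unitarity forces $R^{\top}=R$, without which neither the symmetry of $W$ nor the key identity would go through cleanly.
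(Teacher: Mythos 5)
Your proof is correct and follows essentially the same route as the paper's: establish $R^{\top}=R$, verify unitarity of $W$, note $W^{\top}=W$ so the left-hand side becomes $W^{\dagger}M^{\top}W$, and then invoke the identity $WR=iW^{\dagger}$ (the paper's equivalent form is $iRW^{\dagger}=W$; the two agree since $R$ and $W$ commute). The only difference is expository, in how explicitly the reduction to that single identity is spelled out.
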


\begin{proof}
The conditions on $R$ imply $R^{\top}=R$ and $R^{\dagger}=R$. To
see $W$ is unitary, we calculate
\begin{equation*}
2W^{\dagger}W=\left(R-iI\right)\left(R+iI\right)=2I.
\end{equation*}
Note that $W^{\top}=W$ and so $\overline{W}=W^{\dagger}$. Also
\begin{equation*}
iRW^{\dagger}=\frac{1}{\sqrt{2}}iR\left(R-iI\right)=W.
\end{equation*}
 The main calculation is then
\begin{align*}
\left(WMW^{\dagger}\right)^{\top} & =W^{\dagger}M^{\top}W
  =WRM^{\top}RW^{\dagger}
  =WM^{\tau}W^{\dagger}.
\end{align*}
\end{proof}

For working with the given physical system, we need a fixed $W$, namely
\begin{equation}\label{eqn:cannonical_W}
W=\frac{1}{\sqrt{2}}\left(C_2+iI\right).
\end{equation}
This is our choice to intertwine the two symmetry pictures, meaning
\begin{equation} \label{eqn:intertwine-R-with-conjugation}
\left(WMW^{\dagger}\right)^{\top}=WM^{\rho}W^{\dagger}.
\end{equation}
Indeed, there are other choices for $W$ to achieve this intertwining and we need a fixed choice to get a well-defined local topological invariant.

The basic observables in the physical models we wish to study will be triples in the following set.
\begin{multline} 
\mathcal{M}_{\epsilon}(\mathcal{R},2n)= 
\Biggl\{ \left(X,Y,H\right)\in\left(\mathbf{M}_{2n}\right)^{3} \\ 
\left| \begin{aligned} & (X,Y,H)\text{ is }\ensuremath{\mathcal{R}^{(-,-,+)}}\text{-symmetric}\\
 & \left\Vert [H,X]\right\Vert \leq\epsilon,\ \left\Vert [H,Y]\right\Vert \leq\epsilon\\
 & L(X,Y,H)\mbox{ is invertible}
\end{aligned}
\right\}  . \label{eq:s11}
\end{multline}
If the Hamiltonian actually commutes with position then $(X,Y,H)\in \mathcal{M}_{0}(\mathcal{R},2n)$ is \emph{an atomic limit} in this class of models.

After a change of basis, so conjugation of everything by $W$, this set becomes the following,
\begin{multline} 
\mathcal{M}_{\epsilon}(\mathcal{K},2n)=\Biggl\{ \left(X,Y,H\right)\in\left(\mathbf{M}_{2n}\right)^{3} \\
\left| \begin{aligned} & (X,Y,H)\text{ is }\ensuremath{\mathcal{K}^{(-,-,+)}}\text{-symmetric}\\
 & \left\Vert [H,X]\right\Vert \leq\epsilon,\ \left\Vert [H,Y]\right\Vert \leq\epsilon\\
 & L(X,Y,H)\mbox{ is invertible}
\end{aligned}
\right\}  . \label{eq:s12}
\end{multline}
Except for how the relevant matrices act on vectors, these two sets have identical structure.  They are isometric as metric spaces, for example.

In Eqs.~\eqref{eq:s11} and \eqref{eq:s12}, $L(X,Y,H)$ refers to the spectral localizer formed from the triplet of matrices. This operator will be formally introduced in Sec.~\ref{sec:p1}.

\subsection{The commutative case to model atomic limits -- part I \label{sec:p1}}

A system is \emph{in an atomic limit} if the Hamiltonian commutes with all of the position observables.  This shuts down hopping terms between any neighboring sites, so an excitation at a site evolves as if the other sites do not exist. (Different orbitals at the same lattice site can still be coupled in an atomic limit.) In the next subsection we classify all such commuting, locally gapped systems, i.e.\ elements of $\mathcal{M}_{0}(\mathcal{R},2n)$.  Here we first examine the structure of $\mathcal{M}_{0}(\mathcal{K},2n)$.

We need to understand the structure of three commuting Hermitian matrices, two of which are purely imaginary and the third is real. (Again, as $\mathcal{K}$ is just complex conjugation, and $(X,Y,H)$ is $\mathcal{K}^{(-,-,+)}$-symmetric in $\mathcal{M}_{0}(\mathcal{K},2n)$, the position operators are purely imaginary. This is not physically relevant in this form, but it is a convenient basis to explore some of the mathematics.) As such, we need work with a real version of the spectral theorem, and identify an invariant that makes sense for a mix of real and imaginary matrices.  

\begin{thm} \label{thm:real_spectral_thm}
If $N_{1},\dots,N_{k}$ are commuting normal real $n$-by-$n$ matrices,
then there is a real orthogonal matrix $U$, of determinant one, so that
\begin{equation*}
N_{j}=UD_{j}U^{\dagger}
\end{equation*}
with each $D_{j}$ block diagonal, with every block either $1$-by-$1$
and real or a $2$-by-$2$ block of the form
\begin{equation*}
\left[\begin{array}{cc}
a & -b\\
b & a
\end{array}\right]
\end{equation*}
with $a$ and $b$ real.
\end{thm}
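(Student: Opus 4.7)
The plan is to reduce this to the standard complex spectral theorem for commuting normal matrices and then exploit the reality of the $N_j$ to collapse conjugate eigenspaces down to real $2\times 2$ blocks. Concretely, because $N_1,\dots,N_k$ are commuting normal matrices over $\mathbb{C}$, there exists a unitary $V$ and an orthogonal decomposition $\mathbb{C}^n=\bigoplus_{\vec\lambda} E_{\vec\lambda}$ into common eigenspaces, where $\vec\lambda=(\lambda_1,\dots,\lambda_k)\in\mathbb{C}^k$ records the eigenvalues of $N_1,\dots,N_k$. The realness of each $N_j$ means that complex conjugation $\mathcal{K}$ is an antilinear isometry satisfying $\mathcal{K} N_j = N_j \mathcal{K}$, and hence $\mathcal{K}$ maps $E_{\vec\lambda}$ isometrically onto $E_{\overline{\vec\lambda}}$. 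So the joint spectrum splits into two kinds of orbits: fixed points $\vec\lambda=\overline{\vec\lambda}$ (all entries real) and genuine conjugate pairs $\{\vec\lambda,\overline{\vec\lambda}\}$ with $\vec\lambda\neq\overline{\vec\lambda}$.

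Next I would build the real orthonormal basis piecewise on these orbits. On a real-eigenvalue space $E_{\vec\lambda}$ (so $\mathcal{K}$ preserves it), the fixed real subspace $\{v\in E_{\vec\lambda}:\mathcal{K}v=v\}$ has real dimension equal to $\dim_{\mathbb{C}} E_{\vec\lambda}$, and any real orthonormal basis of it contributes $1$-by-$1$ real diagonal blocks $[\lambda_j]$ to each $N_j$. On a conjugate pair $E_{\vec\lambda}\oplus E_{\overline{\vec\lambda}}$ (which is orthogonal because eigenspaces of a normal operator for distinct eigenvalues are orthogonal), pick any complex orthonormal basis $v_1,\dots,v_m$ of $E_{\vec\lambda}$; then $\bar v_1,\dots,\bar v_m$ is orthonormal in $E_{\overline{\vec\lambda}}$, and the pairs
\begin{equation*}
u_\ell = \tfrac{1}{\sqrt{2}}(v_\ell+\bar v_\ell),\qquad w_\ell = \tfrac{i}{\sqrt{2}}(v_\ell-\bar v_\ell)
\end{equation*}
are real, orthonormal, and span the same subspace. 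A direct computation (writing $\lambda_j=a_j+ib_j$ and using $N_j v_\ell=\lambda_j v_\ell$) shows that in the ordered basis $u_1,w_1,u_2,w_2,\dots$ each $N_j$ restricts to a block diagonal matrix whose $\ell$-th block is exactly $\bigl(\begin{smallmatrix} a_j & -b_j \\ b_j & a_j \end{smallmatrix}\bigr)$. Assembling the contributions from all orbits gives a real orthogonal $U$ and the simultaneous block-diagonalization into the advertised form.

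The main obstacle is the $\det U=+1$ normalization, since the construction only guarantees $\det U=\pm 1$. I would handle this last. If there is any $1$-by-$1$ real block, flipping the sign of the corresponding basis vector multiplies $\det U$ by $-1$ and leaves all $D_j$ unchanged. Otherwise only $2$-by-$2$ blocks appear; in any one such block I would replace $w_\ell$ by $-w_\ell$, which multiplies $\det U$ by $-1$ and converts the block from $\bigl(\begin{smallmatrix} a & -b \\ b & a \end{smallmatrix}\bigr)$ to $\bigl(\begin{smallmatrix} a & b \\ -b & a \end{smallmatrix}\bigr)$, but this is precisely the same form after relabeling $b\mapsto -b$ (equivalently, the choice of which of $\vec\lambda,\overline{\vec\lambda}$ is designated the ``primary'' eigenvalue is free). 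The only case this leaves out is $n=0$, which is vacuous. A minor technical point worth stating carefully is that simultaneity of the joint eigenspace decomposition requires the matrices to commute and be normal, both of which are hypotheses; and that the orthogonality of conjugate pair eigenspaces uses normality of each individual $N_j$. Both are standard but deserve an explicit sentence to justify assembling the blocks from independent orbits without interference.
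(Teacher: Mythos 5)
Your proof is correct. The argument is complete: the joint eigenspace decomposition for commuting normals, the observation that complex conjugation $\mathcal{K}$ commutes with each real $N_j$ and hence pairs $E_{\vec\lambda}$ with $E_{\overline{\vec\lambda}}$, the construction $u_\ell,w_\ell$ producing the real $2\times 2$ rotation-like blocks, and the sign-flip to normalize $\det U$ all check out (I verified the block computation yields $N_j u_\ell = a_j u_\ell + b_j w_\ell$ and $N_j w_\ell = -b_j u_\ell + a_j w_\ell$, giving $\bigl(\begin{smallmatrix} a_j & -b_j \\ b_j & a_j \end{smallmatrix}\bigr)$ as claimed). The route differs from the paper's mainly in that the paper outsources the bulk of the theorem to a citation (Gantmacher, Theorem 12, p.\ 292), with an aside that it also follows from the real Schur decomposition for commuting normal real matrices, and only proves the determinant adjustment; you instead give the full, self-contained construction from the complex spectral theorem. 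The determinant fix is the same idea in both, although the paper states it more tersely as ``multiply the first row [of $U$] and the first row and column of each $D_j$ by $-1$'' without separating the cases, whereas you explicitly verify the two cases (a $1\times 1$ block is unchanged under the sign flip; a $2\times 2$ block changes $b\mapsto -b$, which is still of the required form). Your version is more careful on that point; the paper's version is shorter but leaves the reader to confirm that the conjugated $D_j$ remains block-diagonal of the advertised shape.
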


\begin{proof}
Except for the requirement on the determinant of $U$, this can be found on page p.\ 292, Theorem 12, of \cite{Gantmacher1959matrix_theory}.
It also follows easily from the version
of the Schur decomposition that applies to commuting real normal matrices.
If $\det(U)=-1$ then we can multiply its first row by $-1$, and multiply the first row and column of each $X_j$ by $-1$, to get this factorization with the determinant of orthogonal matrix having the opposite sign.
\end{proof}

\begin{thm} \label{thm:imag-imag-real-spectral-theorem}
If $X$, $Y$ and $H$ are commuting $2n$-by-$2n$ Hermitian matrices, with $X$  and $Y$ purely imaginary and $H$ real,  then there is a real orthogonal matrix $U$ of determinant one so that
\begin{equation*}
X=UX_DU^{\dagger},\  
Y=UY_DU^{\dagger},\ 
H=UH_DU^{\dagger},\ 
\end{equation*}
where  $H_D$ is a real diagonal matrix  and $X_D$ and $Y_D$ are block diagonal with
$2$-by-$2$ blocks of the form
\begin{equation}\label{eqn:imaginary_Hermitian_block}
\left[\begin{array}{cc}
0 & -ib\\
ib & 0
\end{array}\right]
\end{equation}
 with $b$ real.
\end{thm}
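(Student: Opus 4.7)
The plan is to reduce the theorem to Theorem~\ref{thm:real_spectral_thm} by multiplying $X$ and $Y$ by $-i$ so that they become real matrices. Set $A = -iX$ and $B = -iY$. Because $X$ is Hermitian and purely imaginary, one quickly checks that $A$ is real and satisfies $A^\top = -A$; in other words $A$ is real skew-symmetric, and similarly for $B$. The operator $H$ is already real and Hermitian, hence real symmetric. All three are real normal matrices, and they mutually commute since $X$, $Y$, $H$ do. Theorem~\ref{thm:real_spectral_thm} therefore produces a real orthogonal $U$ with $\det U = 1$ and a common block-diagonal decomposition
\begin{equation*}
A = U D_A U^\dagger, \quad B = U D_B U^\dagger, \quad H = U D_H U^\dagger,
\end{equation*}
where the blocks are either $1\times 1$ real scalars or $2\times 2$ matrices of the form $\left(\begin{smallmatrix} a & -b \\ b & a \end{smallmatrix}\right)$.

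Next I would impose the extra matrix-symmetry constraints block by block. Since $D_H$ must be symmetric, every $2\times 2$ block has $b=0$ and therefore equals $aI$, so $D_H$ is simply diagonal. Since $D_A$ and $D_B$ must be skew-symmetric, every $1\times 1$ block of each is $0$ and every $2\times 2$ block has $a=0$, i.e.\ takes the form $\left(\begin{smallmatrix} 0 & -b \\ b & 0 \end{smallmatrix}\right)$. Multiplying by $i$ turns the $2\times 2$ blocks of $X_D = iD_A$ and $Y_D = iD_B$ into exactly $\left(\begin{smallmatrix} 0 & -ib \\ ib & 0 \end{smallmatrix}\right)$, as desired.

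What remains is to package any $1\times 1$ blocks into $2\times 2$ ones so that $X_D$ and $Y_D$ have no $1\times 1$ pieces. If $k_1$ and $k_2$ count the $1\times 1$ and $2\times 2$ blocks produced by Theorem~\ref{thm:real_spectral_thm}, then $k_1 + 2k_2 = 2n$ forces $k_1$ to be even, so the $1\times 1$ blocks can be grouped in pairs. In each paired superblock, $D_A$ and $D_B$ each contribute the $2\times 2$ zero matrix, which is already in the allowed form with $b=0$, while $D_H$ contributes a diagonal matrix and so stays diagonal overall. The only real subtlety in the argument is this parity step that lets us consolidate the $1\times 1$ blocks away; once the multi-matrix real spectral theorem is in hand, the remainder is a direct book-keeping calculation.
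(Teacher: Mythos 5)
Your proof is correct and takes essentially the same route as the paper: multiply $X$ and $Y$ by $\pm i$ to obtain real commuting normal matrices, invoke Theorem~\ref{thm:real_spectral_thm}, then transfer the symmetric/skew-symmetric constraints to the blocks. You actually supply more detail than the paper does, in particular the parity argument that lets you consolidate the vanishing $1\times 1$ skew-symmetric blocks into $2\times 2$ zero blocks, a step the paper's proof silently elides.
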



\begin{proof}
Notice that $iX$, $iY$ and $H$ are normal and real.  Applying Theorem~\ref{thm:real_spectral_thm} to these three provides a unitary $U$ of norm one so that
\begin{align*}
iX  =UX_{1}U^{\dagger}, \quad
iY  =UY_{1}U^{\dagger} \enspace\mbox{and}\enspace 
H  =UH_{D}U^{\dagger}
\end{align*}
with being $H_{D}$ real and diagonal and $X_{1}$ and $Y_{1}$ being
real block diagonal with blocks of the form
\begin{equation*}
\left[\begin{array}{cc}
a & -b\\
b & a
\end{array}\right].
\end{equation*}
Since $H$ is Hermitian, $H_{D}$ will be Hermitian. Let $X_{D}=-iX_{1}$
and $Y_{D}=-iY_{1}$. Since $iX$ and $iY$ are be anti-Hermitian, the
blocs of $X_{1}$ and $Y_{1}$ will be anti-Hermitian. The blocks
of $X_{D}$ and $Y_{D}$ will be Hermitian and purely imaginary, so
will be of the form in Eq.~\eqref{eqn:imaginary_Hermitian_block}.
\end{proof}

In Eq.~\eqref{eq:s12}, we noted that the spectral localizer for $(X,Y,H)$ must be invertible.
We now pin down the choice of the $\gamma_j$ in the definition of the spectral localizer in $\mathcal{M}_{\epsilon}(\mathcal{K},2n)$.  We use the Pauli spin matrices in a specific order and so we will use the convention
\begin{equation*}
    L(X,Y,H) = X\otimes \sigma_x +  Y\otimes \sigma_z +  H\otimes \sigma_y .
\end{equation*}
With this convention, $L(X,Y,H)$ will be skew-symmetric (as well as Hermitian) and so has a well-defined Pfaffian.  The sign of this Pfaffian is a natural invariant. 
The fact that the spectral  localizer is skew-symmetric forces its eigenvalues to appear in pairs $\pm \alpha$.  The spectral localizer is $4n$-by-$4n$ so there will be an even number of pairs and so its determinant will be positive.  This implies that its Pfaffian will be real so the sign makes sense.

More generally, we define
\begin{multline}
    L_{(x,y,E)}(X,Y,H) = \\
    (X-x)\otimes \sigma_x +  (Y-y)\otimes \sigma_z +  (H-E)\otimes \sigma_y .
\end{multline}
Only if $x=y=0$ do we retain the symmetries needed to give us a skew-symmetric localizer, so our local invariant is only defined at the center of rotation.

Small examples show that the Pfaffian can come out positive or negative.  The sign of the Pfaffian cannot change along a path in  $\mathcal{M}_{\epsilon}(\mathcal{K},2n)$ since we have excluded the case where the spectral localizer is singular. 

\begin{rem}
A $2n$-by-$2n$ matrix $M$ that is that is are both Hermitian and
skew-symmetric must have spectrum whose eigenvalues come in pairs
$\pm\lambda$ \cite{Lax2007Matrices,Serre2010Matrices}. Two invertible Hermitian, skew-symmetric matrices $H_{0}$
and $H_{1}$ can be connected by a path of invertible Hermitian, skew-symmetric
matrices if an only if $\textup{Pf}(H_{0})$ and $\textup{Pf}(H_{1})$
are of the same sign. Given a path $H_{t}$ of Hermitian, skew-symmetric
matrices, the sign of $\textup{Pf}(H_{t})$ can only change when $\textup{Pf}(H_{t})=\pm\sqrt{\left(\det(H_{t})\right)}$
is zero. This means the sign of $\textup{Pf}(H_{t})$ can only change
when a pair of eigenvalues crosses $0$ in opposite directions.
\end{rem}

\begin{example} \label{ex:opo_sites_math_picture}
For any real $\alpha$, $\beta$ and $\gamma$, consider the matrices
\begin{equation*}
X=\left[\begin{array}{cc}
0 & -i\alpha\\
i\alpha & 0
\end{array}\right],\ Y=\left[\begin{array}{cc}
0 & -i\beta\\
i\beta & 0
\end{array}\right],\ H=\left[\begin{array}{cc}
\gamma & 0\\
0 & \gamma
\end{array}\right].
\end{equation*}
These commute, are Hermitian, with the first two antisymmetric and
the last symmetric. We find
\begin{equation*}
L(X,Y,H)=\left[\begin{array}{cccc}
0 & -i\beta & -i\gamma & -i\alpha\\
i\beta & 0 & i\alpha & - i\gamma\\
i\gamma & -i\alpha & 0 & i\beta\\
i\alpha &  i\gamma & -i\beta & 0
\end{array}\right]
\end{equation*}
whose Pfaffian is always positive, as it equals
\begin{equation*}
\left(-i\beta\right)\left(i\beta\right)-\left(-i\gamma\right)\left(-i\gamma\right)+\left(-i\alpha\right)\left(i\alpha\right)=\alpha^{2}+\beta^{2}+\gamma^{2}.
\end{equation*}
Since
\begin{align*}
    \left(L(X,Y,H) \right)^2 &= \left(X^2 + Y^2 + H^2 \right) \otimes I_2 \\
    &= (\alpha^2 + \beta^2 + \gamma^2 ) \otimes I_4 
\end{align*}
this must have spectrum contained in the set $\{ \pm (\alpha^2 + \beta^2 + \gamma^2 ) \}$.  Every skew-symmetric, Hermitian matrix has spectrum that is symmetric across $0$, so  both eigenvalues must have multiplicity two.  In particular, $L(X,Y,H)$ is invertible so long as at least one of $\alpha$, $\beta$ or $\gamma$ is nonzero.
\end{example}

\begin{example}
For any real numbers  $\gamma_j$, consider the matrices
\begin{equation}
X=\left[\begin{array}{cc}
0 & 0\\
0 & 0
\end{array}\right],\ Y=\left[\begin{array}{cc}
0 & 0\\
0 & 0
\end{array}\right],\ H=\left[\begin{array}{cc}
\gamma_1 & 0\\
0 & \gamma_2
\end{array}\right].
\end{equation}
As the off-diagonal elements are set to zero, these commute.  Trivially, $X$ and $Y$ are purely imaginary, and $H$ is real.  All are Hermitian.  Now we find that $L(X,Y,H)$ can have negative  Pfaffian, as it equals
\begin{equation*}
-\left(-i\gamma_1\right)\left(-i\gamma_2\right)=\gamma_1\gamma_2.
\end{equation*}
We find that  $L(X,Y,H)$ is invertible so long as the $\gamma_j$ are both nonzero.  Indeed, its spectrum is $\{ \pm \gamma_1, \pm \gamma_2  \}$.
\end{example}

Now we show that the sign of the Pfaffian is the only 
obstruction to connecting two triples in  $\mathcal{M}_{0}(\mathcal{K},2n)$.

\begin{prop}
If $(X_1,Y_1,H_1)$  and  $(X_2,Y_2,H_2)$ are in  $\mathcal{M}_{0}(\mathcal{K},2n)$, 
then these are connected by a path in   $\mathcal{M}_{0}(\mathcal{K},2n)$ if, and only if, 
\begin{equation*}
\textup{sign}\left(\textup{Pf}\left(L(X_{1},Y_{1},H_{1}\right)\right) =
\textup{sign}\left(\textup{Pf}\left(L(X_{2},Y_{2},H_{2}\right)\right)
\end{equation*}
\end{prop}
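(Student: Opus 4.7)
The plan is to handle both directions using the real spectral theorem (Theorem~SI.7) together with the Pfaffian-sign invariant from Remark~SI.6. The forward direction is immediate: any path $t \mapsto (X_t, Y_t, H_t)$ in $\mathcal{M}_0(\mathcal{K}, 2n)$ produces a continuous path of invertible Hermitian, skew-symmetric matrices $L(X_t, Y_t, H_t)$, along which $\textup{sign}(\textup{Pf})$ is locally constant by that remark. Only the converse requires work.

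For the converse, I would first apply Theorem~SI.7 to simultaneously block-diagonalize each endpoint triple via a real orthogonal matrix $U_j \in SO(2n)$ of determinant one. Since $SO(2n)$ is path-connected, each $U_j$ can be deformed to the identity along a continuous path of real orthogonals, and conjugating the block-diagonal triple along this path stays in $\mathcal{M}_0(\mathcal{K}, 2n)$: such conjugation preserves commutativity, the $\mathcal{K}^{(-,-,+)}$-symmetry, and the invertibility of $L$, and because $\det(U \otimes I) = \det(U)^2 = 1$ the sign of $\textup{Pf}(L)$ is preserved as well. Without loss of generality, then, both triples may already be assumed to be in the block-diagonal form of Theorem~SI.7, in which $X$ and $Y$ decompose into $2 \times 2$ purely imaginary Hermitian blocks and $H$ is real diagonal.

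Within each $2 \times 2$ block, commutativity forces either (A) $H_k = \gamma_k \mathbf{1}$ scalar with $X_k, Y_k$ arbitrary, having partial Pfaffian $\alpha_k^2 + \beta_k^2 + \gamma_k^2$; or (B) $X_k = Y_k = 0$ with $H_k = \text{diag}(\gamma_{k,1}, \gamma_{k,2})$, having partial Pfaffian $\gamma_{k,1}\gamma_{k,2}$ (Examples~SI.10 and SI.11). To reach a canonical form I would first convert every Case~B block with $\gamma_{k,1}\gamma_{k,2} > 0$ into a Case~A block: equalize $\gamma_{k,1} \to \gamma_{k,2}$ (same sign keeps the partial Pfaffian positive, hence $L$ invertible), then turn on $\alpha_k$ while $H_k$ remains scalar. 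Case~B blocks with negative partial Pfaffian are then handled in pairs: an $SO(4)$ rotation acting on a $4$-dimensional subspace with $X = Y = 0$ and $H = \text{diag}(a,-b,c,-d)$ (with $a,b,c,d > 0$) can interchange the 2nd and 3rd coordinates, taking $H$ to $\text{diag}(a,c,-b,-d)$; the reinterpreted $2\times 2$ blocks each have positive partial Pfaffian and reduce to Case~A by the previous step. Throughout this rotation, commutativity is automatic since $X = Y = 0$, and because the rotation preserves the spectrum of $H$, the localizer $L = H \otimes \sigma_y$ stays invertible. Iterating produces a canonical form determined solely by $\textup{sign}(\textup{Pf}(L))$: all Case~A blocks with, say, $(\alpha_k, \beta_k, \gamma_k) = (1,0,0)$ if positive, and the same together with one residual Case~B block $H = \text{diag}(1,-1)$ if negative. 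Two triples in the same canonical class are joined through this common form, using the path-connectedness of the Case~A parameter space $\mathbb{R}^3 \setminus \{0\}$ within each block.

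The main obstacle is the pairing step for the negative Case~B blocks: I must verify that every intermediate $SO(4)$ rotation indeed remains inside $\mathcal{M}_0(\mathcal{K}, 2n)$, and that re-partitioning the block decomposition does not change $\textup{sign}(\textup{Pf}(L))$. Because every conjugation used along the way is by $U \otimes I$ with $\det(U) = 1$, no stray sign appears in the Pfaffian; the residual bookkeeping reduces to confirming that "the number of negative Case~B blocks modulo two" is the only invariant of the canonical form, and that this parity agrees with $\textup{sign}(\textup{Pf}(L))$, which follows from the product formula for Pfaffians of block-diagonal skew-symmetric matrices.
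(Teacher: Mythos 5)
Your proposal follows essentially the same route as the paper: apply the real spectral theorem to put both triples into $2\times 2$ block-diagonal form, classify each block by which degeneracy commutativity forces, pair up the blocks of negative partial Pfaffian via an $SO(4)$ rotation, and conclude with a canonical form determined solely by $\mathrm{sign}(\mathrm{Pf}(L))$. The only cosmetic differences are that you normalize toward the scalar-$H$ (Case~A) form while the paper normalizes toward $X=Y=0$ blocks, and that your ``interchange the 2nd and 3rd coordinates'' must be realized as a $90^\circ$ rotation in the $(2,3)$-plane (determinant $+1$) rather than a literal transposition (determinant $-1$) to remain in $SO(4)$, which is the same device the paper uses in the $(1,3)$-plane.
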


\begin{proof}
  Since $\textup{SU}(n)$ is connected, we can assume both triples are block diagonal.   In each $2$-by-$2$ common block, the triple looks like
\begin{equation*}
\left[\begin{array}{cc}
0 & -i\alpha\\
i\alpha & 0
\end{array}\right],\ \left[\begin{array}{cc}
0 & -i\beta\\
i\beta & 0
\end{array}\right],\ \left[\begin{array}{cc}
\gamma_1 & 0\\
0 & \gamma_2
\end{array}\right].
\end{equation*}
The commutativity assumption implies that either $\alpha=\beta=0$ or $\gamma_1  =\gamma_2$.   The blocks with $\gamma_1=\gamma_2$ can all be connected by a path of such blocks to the special case where $\alpha=\beta=0$.  Thus we can assume $X=Y=0$ and $H$ is diagonal.

Any common block of the form
\begin{equation*}
\left[\begin{array}{cc}
0 & 0\\
0 & 0
\end{array}\right],\ \left[\begin{array}{cc}
0 & 0\\
0 & 0
\end{array}\right],\ \left[\begin{array}{cc}
-1 & 0\\
0 & -1
\end{array}\right]
\end{equation*}
is homotopic to
\begin{equation*}
\left[\begin{array}{cc}
0 & 0\\
0 & 0
\end{array}\right],\ \left[\begin{array}{cc}
0 & 0\\
0 & 0
\end{array}\right],\ \left[\begin{array}{cc}
1 & 0\\
0 & 1
\end{array}\right],
\end{equation*}
as we see from the path 
\begin{equation*}
\left[\begin{array}{cc}
0 & 0\\
0 & 0
\end{array}\right],\ \left[\begin{array}{cc}
0 & -i\sin\theta\\
i\sin\theta & 0
\end{array}\right],\ \left[\begin{array}{cc}
\cos\theta & 0\\
0 & \cos\theta
\end{array}\right].
\end{equation*}
Since 
\begin{equation*}
\left[\begin{array}{cc}
0 & -1\\
1 & 0
\end{array}\right]\left[\begin{array}{cc}
1 & 0\\
0 & -1
\end{array}\right]\left[\begin{array}{cc}
0 & 1\\
-1 & 0
\end{array}\right]^{\dagger}=\left[\begin{array}{cc}
-1 & 0\\
0 & 1
\end{array}\right]
\end{equation*}
and 
\begin{equation*}
\left[\begin{array}{cc}
0 & -1\\
1 & 0
\end{array}\right]
\end{equation*}
is homotopic to $I_{2}$ in $\textup{SO(2)}$, we can assume blocks
all look like $0$, $0$ followed by
\begin{equation*}
\left[\begin{array}{cc}
1 & 0\\
0 & -1
\end{array}\right]\text{ or }\left[\begin{array}{cc}
1 & 0\\
0 & 1
\end{array}\right].
\end{equation*}
Similarly, 
\begin{multline}
\left[\begin{array}{cccc}
0 & 0 & -1 & 0\\
0 & 1 & 0 & 0\\
1 & 0 & 0 & 0\\
0 & 0 & 0 & 1
\end{array}\right]\left[\begin{array}{cccc}
-1 & 0 & 0 & 0\\
0 & 1 & 0 & 0\\
0 & 0 & -1 & 0\\
0 & 0 & 0 & 1
\end{array}\right]\left[\begin{array}{cccc}
0 & 0 & -1 & 0\\
0 & 1 & 0 & 0\\
1 & 0 & 0 & 0\\
0 & 0 & 0 & 1
\end{array}\right]^{\dagger}= \\
\left[\begin{array}{cccc}
1 & 0 & 0 & 0\\
0 & 1 & 0 & 0\\
0 & 0 & 1 & 0\\
0 & 0 & 0 & 1
\end{array}\right] \notag
\end{multline}
so we can assume there is at most one block with opposite signs. We
can use a unitary of determinent one to swap that block around, so every
triple is homotopic to either $(0,0,I)$ or $(0,0,D)$ where $D$
is diagonal with all diagonal elements equal to $1$ except the top-left
element which is equal to $-1$. 

\end{proof}


\subsection{The commutative case to model atomic limits -- part II}

With these results in hand for the non-physical $\mathcal{M}_{\epsilon}(\mathcal{K},2n)$, we are in a position to translate these results to the physically meaningful $\mathcal{M}_{\epsilon}(\mathcal{R},2n)$.

\begin{defn}
Suppose $(X,Y,H)$ is an element $\mathcal{M}_{\epsilon}(\mathcal{R},2n)$.  We define the spectral localizer as
\begin{multline*}
L_{(x,y,E)}\left(X,Y,H\right)= \\
\left(X-xI\right)\otimes\sigma_{x}+\left(Y-yI\right)\otimes\sigma_{z}+\left(H-EI\right)\otimes\sigma_{y}
\end{multline*}
 and, when $x=y=0$, we can define a local invariant, taking values in $\{1,-1\}\cong \mathbb{Z}_2$,
\begin{multline}\label{eqn:local_index_formula}
\zeta_{E}(X,Y,H) = \\
\textup{sign}\left(\textup{Pf}\left(L_{(x,y,E)}\left(WXW^{\dagger},WYW^{\dagger},WHW^{\dagger}\right)\right)\right),
\end{multline}
where $W$ is as defined in Eq.~\eqref{eqn:cannonical_W}.
This is defined whenever the local Clifford gap is non-zero and $x=y=0$, where the local Clifford gap
is defined as
\begin{equation*}
\mu^{\textup{C}}_{(x,y,E)}(X,Y,H)=\sigma_{\min}\left(L_{(x,y,E)}(X,Y,H)\right),
\end{equation*}
i.e., the smallest singular value of the spectral localizer.
\end{defn}

In passing, we note that a more aesthetically pleasing formula would be
\begin{multline}\label{eqn:local_index_a;t}
\zeta_{E}\left(X,Y,H\right)=\\
\textup{sign}\left(\textup{Pf}\left((W\otimes I)L_{(0,0,E)}\left(X,Y,H\right)(W\otimes I)^\dagger\right)\right).
\end{multline}
However, the formula in Eq.~\eqref{eqn:local_index_formula} will generally yield a slightly faster numerical algorithm. 

We now consider a pair of examples to again show how $\zeta_{E}(X,Y,H)$ can take values of both $\pm 1$.

\begin{example}\label{ex:opo_sites_physics_picture}
Assume we have just two distinct sites that are swapped by rotation, at locations $(\pm\alpha,\pm\beta)$ with $\alpha\neq 0 $ or $\beta\neq 0 $. If
\begin{equation*}
X =\left[\begin{array}{cc} -\alpha & 0\\ 0 & \alpha \end{array}\right],\ 
Y =\left[\begin{array}{cc} -\beta & 0\\ 0 & \beta \end{array}\right],\ 
H =\left[\begin{array}{cc} \gamma & 0\\ 0 & \gamma \end{array}\right] 
\end{equation*}
then $WXW^\dagger$,  $WYW^\dagger$ and  $WHW^\dagger$are the matrices discussed in Example~\ref{ex:opo_sites_math_picture}.
These form a triple in $ \mathcal{M}_{0}(\mathcal{R},2)$ and so
\begin{equation*}
\zeta_{E}(X,Y,H)=1
\end{equation*}
for any $E\neq \gamma$.
\end{example}

\begin{example}
Let    
\begin{equation*}
X=\left[\begin{array}{cc}
0 & 0\\
0 & 0
\end{array}\right],\ Y=\left[\begin{array}{cc}
0 & 0\\
0 & 0
\end{array}\right],\ H=\left[\begin{array}{cc}
\alpha & -i\beta\\
i\beta & \alpha
\end{array}\right].
\end{equation*}
These constitute a triple in $ \mathcal{M}_{0}(\mathcal{R},2)$.  We find
\begin{multline*}
WXW^{\dagger}=\left[\begin{array}{cc}
0 & 0\\
0 & 0
\end{array}\right],\ WYW^{\dagger}=\left[\begin{array}{cc}
0 & 0\\
0 & 0
\end{array}\right], \\ 
WHW^{\dagger}=\left[\begin{array}{cc}
\alpha+\beta & 0\\
0 & \alpha-\beta
\end{array}\right]
\end{multline*}
and so 
\begin{equation*}
\zeta_{E}(X,Y,H)=\textrm{sign}\left((\alpha+\beta-E)(\alpha-\beta-E) \right).    
\end{equation*}
This equals $-1$ for $E$ between $\alpha-\beta$ and $\alpha+\beta$.
\end{example}

\subsection{Connecting to an Atomic limit \label{subsec:Flattening}}

We discussed earlier that two locally-gapped, $C_{2}\mathcal{T}$-symmetric systems of different invariant cannot be connected by a continuous path of
such systems. Here we discuss a possible  converse. We already know that atomic limits of the same invariant can be connected, so will attempt to show that every locally-gapped $C_{2}\mathcal{T}$-symmetric system
can be connected to an atomic limit where this limit system is also locally-gapped and $C_{2}\mathcal{T}$-symmetric. A complication arises; if the starting system $(X_{0},Y_{0},H_{0})$ is in 
$\mathcal{M}_{\epsilon}(\mathcal{R},2n)$,
and $(X_{1},Y_{1},H_{1})$ is that atomic limit, we cannot expect
$(X_{t},Y_{t},H_{t})$ to stay in $\mathcal{M}_{\epsilon}(\mathcal{R},2n)$.
Instead, we hope to prove that $(X_{t},Y_{t},H_{t})$ stays in in $\mathcal{M}_{\delta}(\mathcal{R},2n)$
where $\delta$ is a bit larger than $\epsilon$. 

Here we offer only a sketch of a possible argument. The estimates on how large $\delta$ will be will depend on both $\epsilon$ and the size of the local gap.  We anticipate that soon someone will develop rigorous results about paths to commuting matrices with these symmetries, for both open and periodic boundary conditions.  The recent advances \cite{herrera2024SymmetryBootstrap} in matrix approximations that respect antilinear and linear symmetries should work in this setting.  Earlier work, not involving antilinear symmetries, indicates that this approximation only works when the ``joint spectrum'' of the given matrices is two-dimesional \cite{EndersShulman2023AC_matrices_and_dimension}.  It is for this reason we start with spectral flattening.

There are at least two ways to define a local gap of system $(X,Y,H)$.
The quadratic gap is the square root of the smallest singular value of 
\begin{equation*}
Q(X,Y,Z)=X^{2}+Y^{2}+H^{2},
\end{equation*}
while the Clifford gap is the smallest singular value of
\begin{equation*}
L(X,Y,H)=\left[\begin{array}{cc}
H & X-iY\\
X+iY & -H
\end{array}\right].
\end{equation*}
The Clifford gap at zero is denoted by $\mu_{(0,0,0)}(X,Y,H)$ in the
main text.  Since
$L(X,Y,H)$ is an approximate square-root of an amplification of $Q(X,Y,H)$,
these two notions of local gap are approximately equal. 

The goal of the spectral flattening is to gain the approximate relation
which makes this an approximate representation of a sphere. We expect this reduction to help, as many related commuting-matrix approximation problems require at most a two-dimensional ``joint spectrum'' \cite{EndersShulman2023AC_matrices_and_dimension}.

Suppose we are given $(X,Y,H)$ in $\mathcal{M}_{\epsilon}(\mathcal{R},2n)$.
Beyond insisting that $L(X,Y,H)$ is invertible, let's assume
\begin{equation*}
g^2\leq\left(L(X,Y,H)\right)^{2}\leq G^2
\end{equation*}
where $2\epsilon<g<G$. Then
\begin{equation*}
g^2-2\epsilon\leq X^{2}+Y^{2}+H^{2}\leq G^2+2\epsilon .
\end{equation*}
The usual spectral flatting adjusts only $H$, replacing it with
$\tilde{H}=H\left(H^{2}\right)^{-\frac{1}{2}}$.
We cannot do this here, as open boundary conditions means there is
no sizable gap expected in our $H$ and so the commutator of $\tilde{H}$
with $X$ and $Y$ will likely blow up. Instead, we define 
\begin{equation*}
Q=X^{2}+Y^{2}+H^{2}
\end{equation*}
and then
\begin{equation*}
X_{t}=Q^{-\frac{t}{4}}XQ^{-\frac{t}{4}},\ Y_{t}=Q^{-\frac{t}{4}}YQ^{-\frac{t}{4}},\ H_{t}=Q^{-\frac{t}{4}}HQ^{-\frac{t}{4}}.
\end{equation*}
These are continuous paths of Hermitian matrices. We next check that
we still have $C_{2}\mathcal{T}$-symmetry. We can show, by polynomial approximation
to the square-root function, that
$\left(Q^{-\frac{t}{4}}\right)^{\rho}=\left(Q^{\rho}\right)^{-\frac{t}{4}}$
and so 
\begin{equation*}
\left(Q^{-\frac{t}{4}}\right)^{\rho}=\left(X^{2}+Y^{2}+H^{2}\right)^{-\frac{t}{4}}=Q^{-\frac{t}{4}}.
\end{equation*}
Given this, it is easy to show that
\begin{equation*}
    X_t ^\rho = -X_t ^\rho,\ Y_t ^\rho = -Y_t ^\rho,\ H_t ^\rho = H_t ^\rho.
\end{equation*}
Again using polynomial approximation, one can show that 
\begin{equation*}
\left\Vert \left[H_{t},X_{t}\right]\right\Vert \leq\delta,\ \left\Vert \left[H_{t},Y_{t}\right]\right\Vert \leq\delta
\end{equation*}
where $\delta$ is large than $\epsilon$, but only depends on $\epsilon$
and $g$. See Theorem 3.2.32 of \cite{bratteli2012OperAlg_StatMec}.
Thus $(X_{t},Y_{t},H_{t})$ is a continuous path in in $\mathcal{M}_{\delta}(\mathcal{R},2n)$.
Since
\begin{equation*}
X_{t}^{2}\approx Q^{-\frac{t}{2}}X^{2}Q^{-\frac{t}{2}}
\end{equation*}
etc., we find 
\begin{equation*}
\left(L(X_{t},Y_{t},H_{t})\right)^{2}\approx Q^{1-t}.
\end{equation*}
Here we will need $\epsilon$ small compared to $g^2$ to ensure that
$(X_{t},Y_{t},H_{t})$ remains gapped for every $t$. 

At $t=1$ we
find we have a new approximate relation,
\begin{equation*}
X_{1}^{2}+Y_{1}^{2}+H_{1}^{2}\approx I.
\end{equation*}
As we explain in Section A of \cite{Cerjan2024a}  we can map the coordinate
functions in $C(S^{2})$ to these three matrices and extend to get
a map 
\begin{equation*}
\varphi:C(S^{2})\rightarrow\boldsymbol{M}_{2n}(\mathbb{C})
\end{equation*}
that behaves somewhat like a $*$-homomorphism. In this case, the
real structure $M\mapsto M^{\rho}$ will correspond to the real structure
on $C(S^{2})$ induced by a 180-degree rotation. This is very close
to the setting of \cite{loringSor2014AC_real_orthogonal}. The main result there tells us that two
almost commuting real orthogonal matrices are always close to commuting
real orthogonal matrices. What we have is a similar mathematical situation,
where we replace the role of the two-torus with a rotation by the two-sphere with a rotation. If the
following conjecture is correct, we can derive the same result, but
with $X^{\rho}=-X$, etc., as we have the unitary $W$ that
conjugates one picture to the other.

\begin{conj}
For any $\eta>0$ there is there is a $\delta>0$ such that, for all
$n$, given matrices $X_1$, $Y_1$ and $H_1$ in $\boldsymbol{M}_{2n}(\mathbb{C})$
with 
\begin{equation*}
X_1^{\dagger}=X_1=-\overline{X_1},\ Y_1^{\dagger}=Y_1=-\overline{Y_1},\ H_1^{\dagger}=H_1=\overline{H_1}
\end{equation*}
and 
\begin{equation*}
\left\Vert \left[H_1,X_1\right]\right\Vert \leq\delta,\ \left\Vert \left[H_{1},Y_{1}\right]\right\Vert \leq\delta,\left\Vert X_{1}^{2}+Y_{1}^{2}+H_{1}^{2}-I\right\Vert \leq\delta,
\end{equation*}
there there is a triple $(X_{2},Y_{2},H_{2})$ of commuting Hermitian
matrices, with $X_{2}^{\rho}=-X_{2}$, $Y_{2}^{\rho}=-Y_{2}$, $H_{2}^{\rho}=H_{2}$
and $X_{2}^{2}+Y_{2}^{2}+H_{2}^{2}=I$ and 
\begin{equation*}
\left\Vert X_{2}-X_{1}\right\Vert \leq\eta,\ \left\Vert Y_{2}-Y_{1}\right\Vert \leq\eta,\ \left\Vert H_{2}-H_{1}\right\Vert \leq\eta.
\end{equation*}
\end{conj}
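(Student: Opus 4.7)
The plan is to interpret the triple $(X_1,Y_1,H_1)$ as an approximate $\ast$-representation of the commutative real $C^*$-algebra $C(S^2)$, where the real structure on $C(S^2)$ is induced by the $180^\circ$ rotation that exchanges the two copies of the hemispheres (combined with complex conjugation of the scalar output), so that the coordinate functions $x, y$ pull back to skew-symmetric Hermitian matrices and $z$ pulls back to a symmetric Hermitian matrix. Under this correspondence, the conclusion of the conjecture asks that every approximate real $\ast$-representation is close to an exact one. This is exactly analogous to the setting of \cite{loringSor2014AC_real_orthogonal}, but with the real structure on the two-torus (a rotation) replaced by the real structure on the two-sphere (also a rotation). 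The approach is therefore to translate the $C^*$-algebraic machinery used there to the current $S^2$-with-rotation setting.

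\textbf{Key steps.} First, I would pass from the approximate sphere relation $\lVert X_1^2+Y_1^2+H_1^2 - I\rVert\leq\delta$ to an exact one by the spectral flattening map $M\mapsto Q^{-1/4}MQ^{-1/4}$ used in Section~\ref{subsec:Flattening}; this yields a nearby triple satisfying $X'^2+Y'^2+H'^2=I$ exactly while preserving the $\rho$-symmetry structure and only mildly increasing the commutator norms. Second, I would use continuous functional calculus to extend the near-representation of polynomials in $x,y,z$ to a $\ast$-linear map $\varphi:C(S^2)\to\mathbf{M}_{2n}(\mathbb{C})$ that is multiplicative up to error $O(\delta^{1/2})$, as explained in Section~A of \cite{Cerjan2024a}. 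Third, I would invoke a semi-projectivity-type result for $C(S^2)$ (the non-equivariant Lin-type theorem for three almost-commuting Hermitian matrices on the sphere) to obtain a nearby triple $(\tilde X,\tilde Y,\tilde H)$ of \emph{exactly} commuting Hermitian matrices satisfying $\tilde X^2+\tilde Y^2+\tilde H^2=I$, with no symmetry control yet. Fourth, I would symmetrize by averaging over the two-element group generated by $\rho$: set
\begin{equation*}
X_2 = \tfrac{1}{2}(\tilde X - \tilde X^\rho),\quad Y_2 = \tfrac{1}{2}(\tilde Y - \tilde Y^\rho),\quad H_2 = \tfrac{1}{2}(\tilde H + \tilde H^\rho),
\end{equation*}
which restores the desired $\rho$-symmetries exactly, remains within $O(\eta)$ of the originals (since the original triple was $\rho$-symmetric), and keeps the matrices Hermitian. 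The problem reduces to showing that commutativity and the sphere relation are approximately preserved by this averaging, which can then be corrected by one more application of spectral flattening combined with a polar-decomposition re-symmetrization.

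\textbf{Main obstacle.} The hard step is the third one: obtaining an exact commuting approximation close to a near-commuting one on $C(S^2)$, even without symmetries. Lin's theorem handles two Hermitian matrices but fails for three in general because $C(S^2)$ is not semiprojective in full generality; a genuine K-theoretic obstruction appears when the approximate homomorphism carries nontrivial class in $K_0(C(S^2))=\mathbb{Z}^2$, which in our language is precisely the spectral localizer's Bott-type index. For triples arising from a $\rho$-symmetric $(X_1,Y_1,H_1)$ with $\zeta_0=+1$, this obstruction vanishes, so the expected approximation should exist; for $\zeta_0=-1$, one cannot reach the ``trivial'' atomic limit $(0,0,I)$ but should reach the nontrivial one $(0,0,D)$ identified in the classification of Section~\ref{sec:p1}. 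A clean proof will therefore require an equivariant index theorem matching the two $KO$-theoretic classes of real-structured representations of $C(S^2)$ with the two values of $\zeta_0$, together with an equivariant refinement of the non-equivariant almost-commuting theorem. The equivariant refinement is the content flagged by the authors' reference to \cite{herrera2024SymmetryBootstrap}, and getting quantitative estimates that are independent of $n$ (as the conjecture demands) is the principal technical difficulty.
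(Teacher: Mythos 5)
This statement is explicitly posed as an open \emph{conjecture} in the paper; the authors do not prove it, and they flag precisely the equivariant refinement of the almost-commuting theorem as the missing ingredient. Your proposal therefore cannot be measured against a paper proof, but it can be assessed on its own terms. Your high-level framing --- spectral flattening to enforce the sphere relation exactly, reinterpreting the triple as an approximate real $\ast$-representation of $C(S^2)$ with the half-turn real structure, and arguing by analogy with the torus-with-rotation result of \cite{loringSor2014AC_real_orthogonal} --- is the same strategy the authors sketch, so your strategic orientation matches theirs.

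Two specific points in your sketch, however, would fail. First, your symmetrization step (step four) is circular. If $(\tilde X,\tilde Y,\tilde H)$ is an exactly commuting triple produced by a non-equivariant almost-commuting theorem, then the averaged triple $X_2=\tfrac12(\tilde X-\tilde X^\rho)$, $Y_2=\tfrac12(\tilde Y-\tilde Y^\rho)$, $H_2=\tfrac12(\tilde H+\tilde H^\rho)$ is generically \emph{not} commuting: $\tilde X$ commutes with $\tilde Y$ and $\tilde X^\rho$ commutes with $\tilde Y^\rho$, but there is no reason for $\tilde X$ to commute with $\tilde Y^\rho$. The averaged triple only approximately commutes, and ``correcting'' this requires exactly the approximation theorem you set out to establish. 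One cannot factor the equivariant problem into ``solve without symmetry, then average.'' Second, your description of the $K$-theoretic obstruction is misattributed. The Bott index that obstructs the non-equivariant $C(S^2)$ problem --- the signature of the spectral localizer --- vanishes \emph{identically} for every $\rho$-symmetric triple, not merely when $\zeta_0=+1$: after conjugation by $W$, the localizer $L_{(0,0,0)}$ is a skew-symmetric Hermitian matrix, whose spectrum is symmetric about zero and hence has zero signature. The obstruction that actually survives lives in $KO_2$ and is the $\mathbb{Z}_2$-valued Pfaffian sign $\zeta_0$; but that class does not obstruct the conjecture as stated, since both values of $\zeta_0$ are realized by exactly commuting, $\rho$-symmetric, sphere-relation triples (the two atomic-limit examples in Supplementary Sec.~SI). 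What remains genuinely open --- and what you correctly identify at the very end --- is a quantitative, $n$-uniform, equivariant almost-commuting theorem in this real structure, which is precisely what the authors leave as a conjecture and defer to future work along the lines of \cite{herrera2024SymmetryBootstrap}.
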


Assuming this conjecture true, we find that we can find a commuting
$C_{2}\mathcal{T}$-symmetric system $(X_{2},Y_{2},H_{2})$ close to $(X_{1},Y_{1},H_{1})$.
As long as $\eta$ is small, we can linearly interpolate between $(X_{1},Y_{1},H_{1})$
and $(X_{2},Y_{2},H_{2})$ to complete the path. The result is a path
of locally gapped, $C_{2}\mathcal{T}$-symmetric systems with limited growth
in the commutators that terminates in an atomic limit. By limited
growth we mean 
\begin{equation*}
\left\Vert \left[H_{t},X\right]\right\Vert \leq\frac{2\left\Vert \left[H_{t},X\right]\right\Vert }{g}
\end{equation*}
 or something similar.

\subsection{Strong vs Fragile topology in the pseudospectrum}

The invariant $\zeta_{E}$ differs fundamentally from the local version
of the Chern number that looks at 
\begin{equation*}
C_{L}(x,y,E)=\frac{1}{2}\textup{sig}\left(L_{(x,y,E)}(X,Y,H)\right).
\end{equation*}
The key difference is that our new invariant can only defined at $x=y=0$. Because $C_{L}(x,y,E)$ is
not restricted to $x=y=0$, the Clifford spectrum (points in position-energy
space where the local gap is zero) of a Chern insulator always is
sphere-like. Specifically, if one travels on any ray out of the origin
(assuming a local gap at the origin) one will hit at least one point
in the Clifford spectrum. In math terms, we say that the Clifford
spectrum separates the origin from infinity. 

In Fig.~\ref{fig:figsm2} we plot a portion near the origin of the $y=0$ slice
of the Clifford pseudospectrum of the tight-binding model considered in
the main text next to the same for a Chern insulator (the Haldane model).  Of particular importance is the portion that is completely black.  This set is called the Clifford spectrum. At first glance, the black parts in the both plots look sphere-like, but a close look reveals a difference. The black region in Fig.~\ref{fig:figsm2}(b) does not entirely enclose
the origin; one can see separation between some of the black
dots. Indeed, we suspect that the unusual spectral flattening discussed
in \S \ref{subsec:Flattening} may not be necessary. That is, the speckled look of the Clifford pseudospectrum seems to indicate that the TBG tight binding model is already close to an atomic limit.

Local invariants that detect other strong invariants, at least in classes A, AI and AII, also force the Clifford spectrum to separate the origin from infinity.  For example, see Figure 9.8 in \cite{Loring2015} for a horizontal slice ($E=0$) of a spin-Chern insulator. For a 3D example in class AII see Fig. 10.1 \cite{Loring2015}, where now the image is only for $z=E=0$.  For a 4D example in class AI, see Fig. 7.1 \cite{cerjanLoring2024even_spheres}.  In all cases, the Clifford spectrum has a portion that surrounds the origin like a sphere, and this feature is stable under perturbations within the symmetry class.

\begin{figure}[h]
\center
\includegraphics[width=\columnwidth]{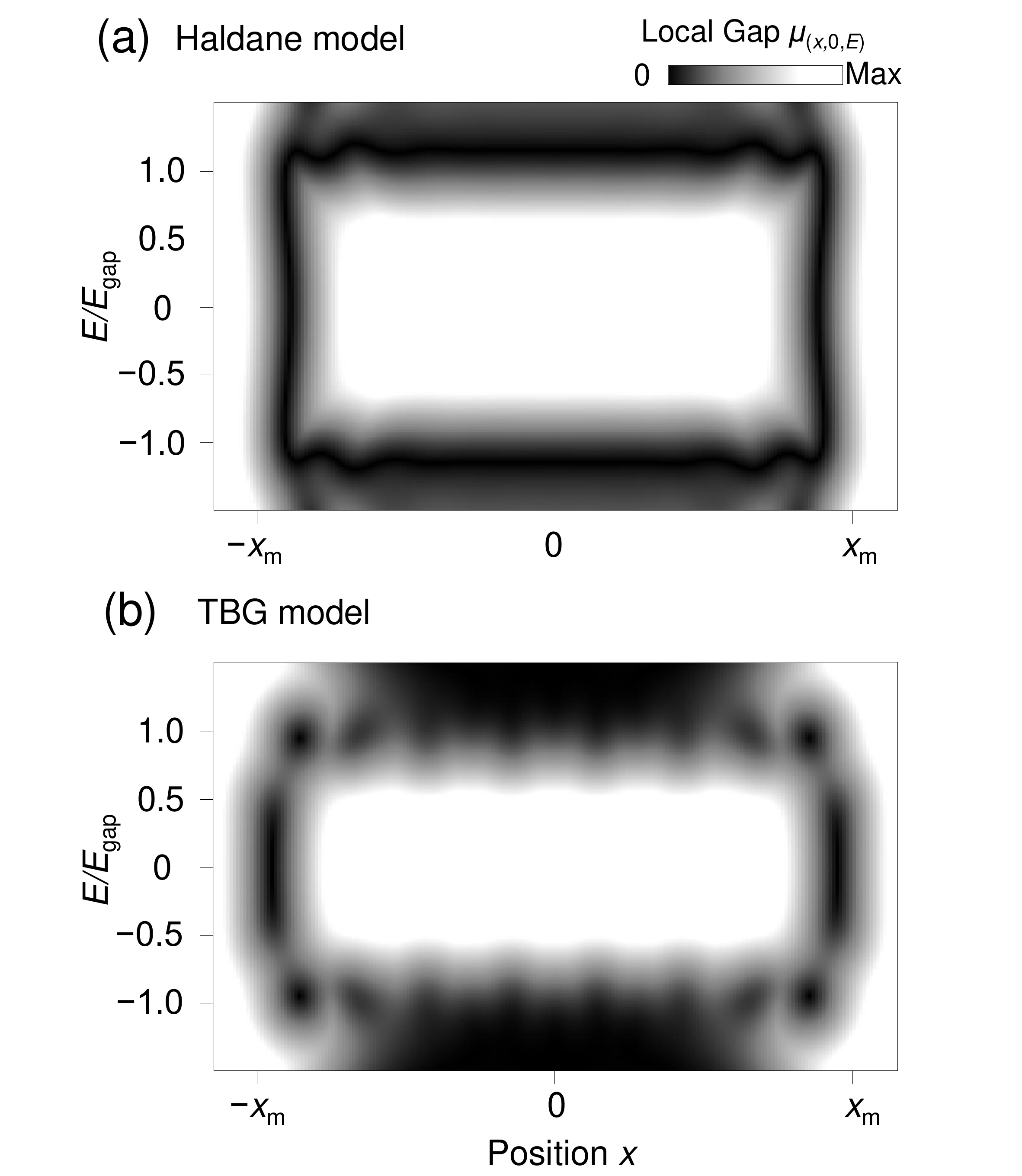}
\caption{
Clifford pseudospectrum -- strong vs.\ fragile topology. 
Comparison of the Clifford pseudospectrum between models of a Chern insulator and a fragile insulator. 
\text{(a)} A standard Haldane model of a Chern insulator, with local gap shown along the slice $y=0$.
\text{(b)} The TBG tight-binding model considered in main text, same slice shown of the local gap. The blackest regions in each is Clifford spectrum, where the local gap is zero.  In panel \text{(a)} we see a slice of a sphere-like region that completely surrounds the position-energy origin $(x,y,E)=(0,0,0)$.  In panel \text{(b)} the black region does not completely enclose the origin.
}
\label{fig:figsm2}
\end{figure}

\subsection{An explanation via $K$-theory}

The formula in Equation~\eqref{eqn:local_index_formula} for the invariant $\zeta_0$ could have been deduced from Theorem 4.5 of \cite{Boersema2020_K-theory_by_symmetriesII}.  That theorem is about explicit generators of the various $KO$ and $KU$ groups of 
$C(S^{2}\setminus\{\textbf{np}\},\sigma)   $
where $\sigma$ is the rotation by 180 degrees that fixes the removed North pole $\textbf{np}$ and the South pole. This tells us that
$L\left(\hat{x},\hat{y},\hat{z}\right)
$
is a unitary with the correct symmetries to represent a generator of 
\begin{equation*}
KO_2\left(C(S^{2}\setminus\{\textup{np}\},\sigma)\right),
\end{equation*}
so long as the $\gamma$ matrices are chosen correctly. Here $\hat{x}$ and so forth are the coordinate functions restricted to the sphere.
Notice, however, that all the formulas in this table look like a localizer of the coordinate functions with varying choices of $\gamma$ matrices.
As such, it is often possible to guess the needed index formulas, just by seeking the minimum size $\gamma$ matrices that have the needed symmetries so the spectral localizer will end up with the expected
symmetries.

To get an invariant out an element of a $KO$-group of $C(S^d)$ with some symmetry, we just replace the coordinate functions of with the matrix observables and arrive at in invertible element in $\mathbf{M}_n(\mathbb{C})$ with some antiunitary symmetry.  (In some cases more than one antiunitary symmetry is involved.)  Since the target algebra does not change as we change dimensions and symmetry class, all these invariants (so far) end up with one of three calculations, the signature, sign of a determinant, or sign of a Pfaffian.  The mathematical formalism here involves universal $C^*$-algebras, as explained in Section 3 of \cite{loring2014quantitative}.  Fortunately, one can understand why these indices can only change with the local gap closes without this abstraction.

There are explicit generators of $KO$ groups calculated in \cite{boersemaL2015_K-theory_by_symmetries} for a different symmetries on the two-sphere, and research in this is ongoing.  It is anticipated that pseudospectral methods to create local topological invariants will work with antiunitary symmetries that commute with some of the position observables and anticommuting with others, and either commute or anticommute with the Hamiltonian.

\begin{figure}[t]
\center
\includegraphics[width=\columnwidth]{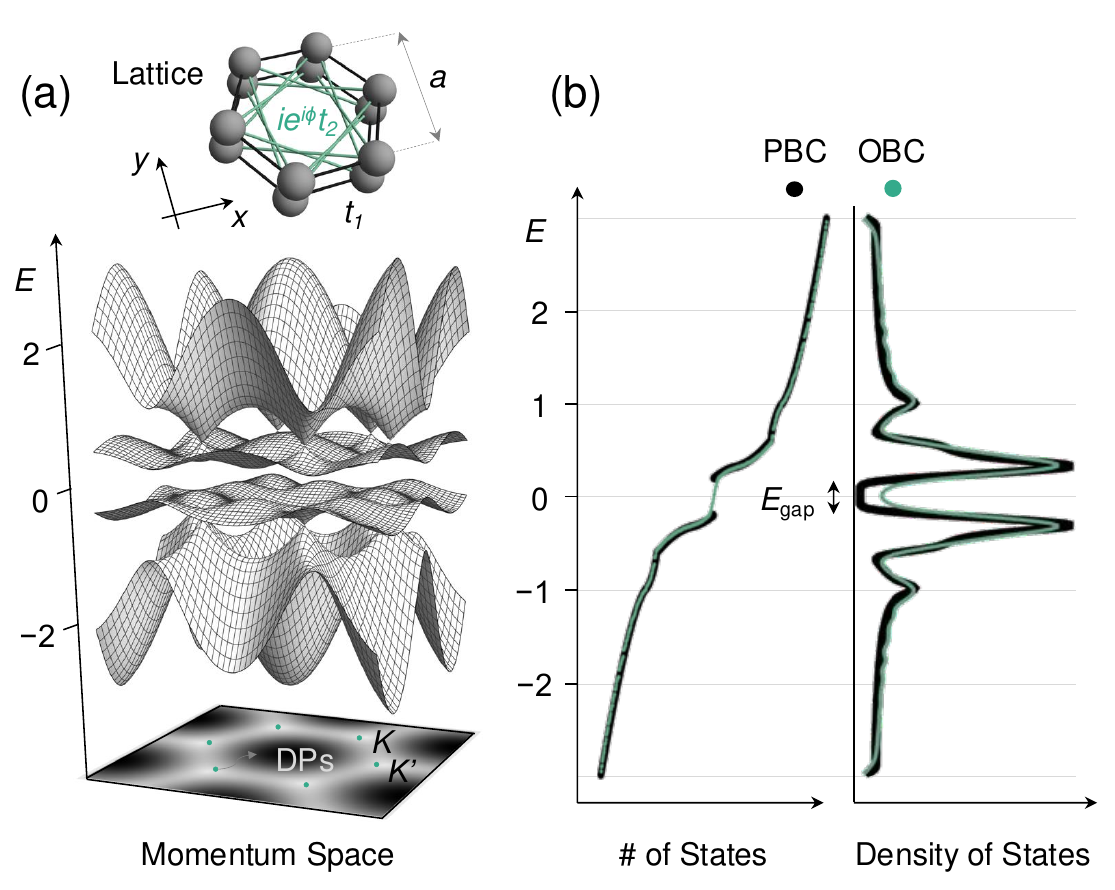}
\caption{
Twisted bilayer graphene mori{\' e} lattice model and its characteristic features.
\text{(a)} Schematic representation for the unit-cell lattice and the calculated band structure.  The band gap width is denoted by $E_{\textrm{gap}}$ and Dirac points (DPs) are shown at the $K$ and $K'$.  High-symmetry points are $\Gamma = (0, 0)$, $M = (2\pi/\sqrt{3}a, 0)$, and $K = (2\pi/\sqrt{3}a, 2\pi/3a)$.
\text{(b)} Energy eigenvalues for the number of states and its density of states under periodic (PBC) and open (OBC) boundary conditions. Here, $t$ is an overall energy scale.
}
\label{fig:figsm_01}
\end{figure}

\begin{figure}[h]
\center
\includegraphics[width=\columnwidth]{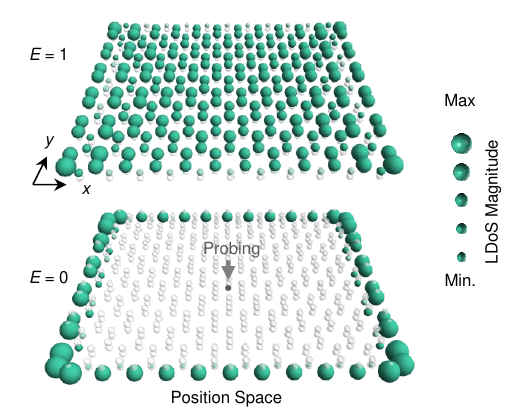}
\caption{
LDoS of the finite TBG model. LDoS within ($E=0$) and outside ($E=t$) the bulk band gap. The magnitude of the LDoS at specific energies is represented by the size of the green spheres at each lattice sites.
}
\label{fig:figsm_0}
\end{figure}

\section{The TBG lattice model, edge states, and the effects of disorder}

In this section, we present the basic descriptions of TBG lattice model and calculations of edge states arising in the open boundary TBG model. We consider a four-band tight-binding model known to describe the essential physics of the nearly flat bands of TBG according to a low-energy continuum theory. As depicted in Fig.~\ref{fig:figsm_01}\text{(a)}, the model consists of a honeycomb lattice with two layers. The explicit expression of the Hamiltonian for this lattice model in momentum space is given by
\begin{multline}
\label{eq:mo_H}
h(\vec{k}) = 
\hat{t}_1 \otimes \left[ \left( 1 + 2 \cos \frac{\sqrt{3} k_x a}{2} \cos \frac{k_y a}{2} \right) \sigma_x \right.  \\ 
\left. \quad + 2 \sin \frac{\sqrt{3} k_x a}{2} \cos \frac{k_y a}{2} \sigma_y \right]  
+ \hat{t}_2 \otimes \mathbf{1} \left[ n(\vec{k}, \phi)+n^*(\vec{k}, \phi) \right],       
\end{multline}
\noindent
with
\begin{align}
\label{eq:NNN}
\begin{split}
n(\vec{k}, \phi) = i e^{i\phi} \left(e^{i\mathbf{k}\cdot\mathbf{a}_1}+e^{-i\mathbf{k}\cdot\mathbf{a}_2}+e^{i\mathbf{k}\cdot\mathbf{a}_3}   \right) ,    
\end{split}
\end{align}
where we employ $\hat{t}_1 = 0.4 t\mathbf{1} + 0.6  t \tau_z$ and $\hat{t}_2 =  0.1  t \tau_x$, meaning intra- and inter-layer hopping amplitudes, respectively, as schematically shown by the black and green lines in Fig.~\ref{fig:figsm_01}\text{(a)}. Here, $\mathit{t}$ indicates the overall energy scale, $\mathbf{k} = (k_x, k_y)$ is the in-plane momentum, and the primitive vectors are $\mathbf{a}_{1,2} = (\sqrt{3}, \pm 1)a/2$ and $\mathbf{a}_3=\mathbf{a}_1-\mathbf{a}_2$. The Pauli matrices $\sigma_{x,y,z}$ and $\tau_{x,y,z}$ represent the sublattice and orbital degrees of freedom, respectively. The green lines, spirally connecting inter-layer sites, denote the next nearest neighbor (NNN) hoppings that are crucial for inducing a nontrivial fragile band gap. We consider the additional inter-layer hopping phase to be $\phi = 0$ here so that the initial coefficient of inter-layer hopping is purely imaginary. In Fig.~\ref{fig:figsm_01}\text{(a)}, we show the bulk band structure of the four-band model. All four energy bands are symmetric with respect to the Fermi level and each pair of bands exhibits Dirac points (DPs) at each of the $K$ and $K'$ points throughout the Brillouin zone. These DPs are protected by space-time $C_2\mathcal{T}$ inversion symmetry~\cite{Ahn2019}, where $\mathcal{T}^2 = \mathbf{1}$ and $C_2$ denotes a twofold rotation about the $z$-axis.
In addition, the valence bands exhibit an obstruction that prohibits their representation by exponentially localized Wannier functions that obey the system's $C_2\mathcal{T}$-symmetry. As this obstruction disappears when more trivial bands are added to the model, this system exhibits fragile topology~\cite{Ahn2019,Po2019}.

\begin{figure}[t]
\center
\includegraphics[width=\columnwidth]{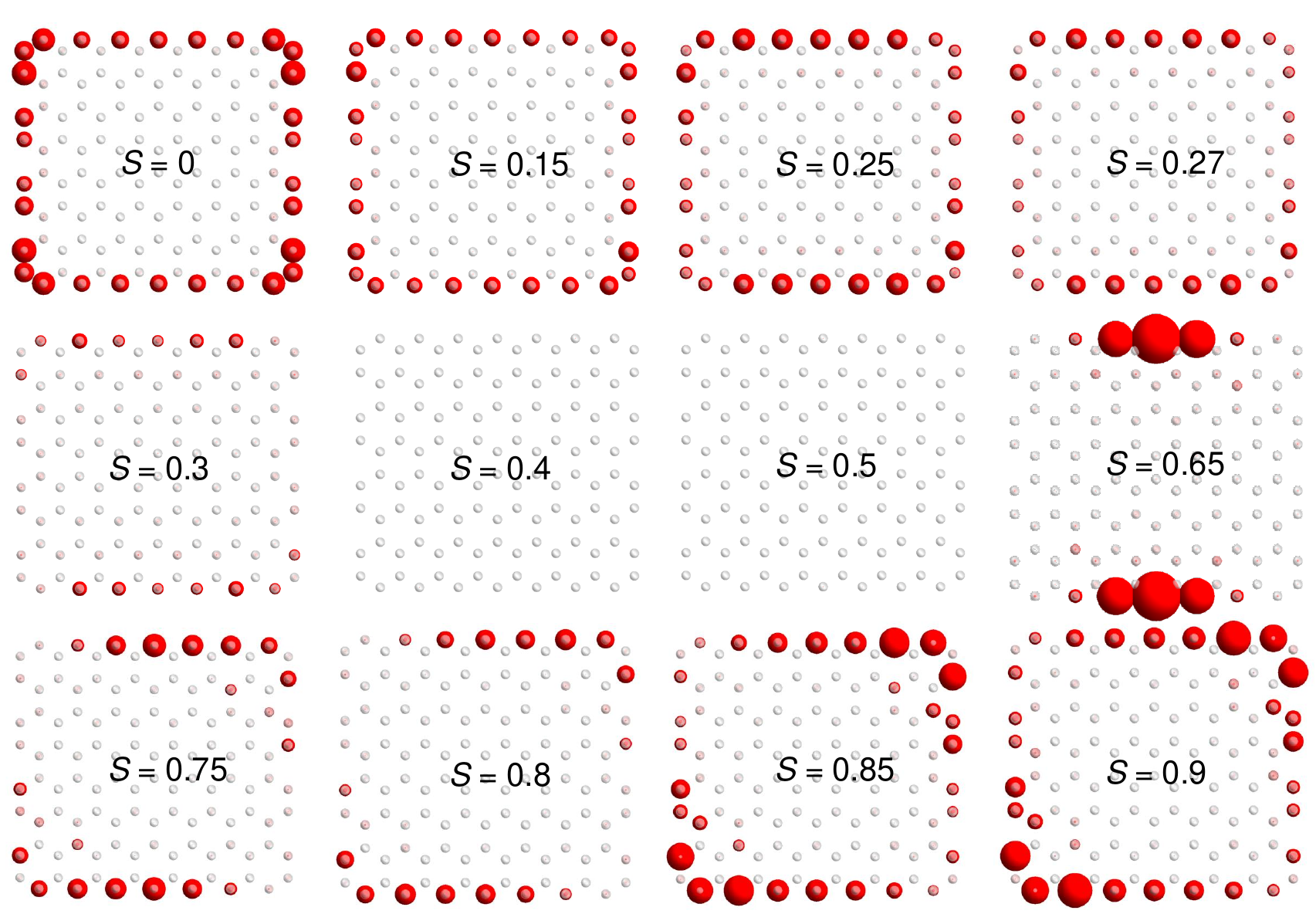}
\caption{
LDoS for the TBG model at $E$ = 0 with $C_2 \mathcal{T}$-symmetry-preserving disorder. Each plot show the LDoS for a single disorder configuration for each selected disorder strength $S$.}
\label{fig:figsm4}
\end{figure}

However, to study finite geometries and disorder in the four-band TBG model, we instead work with the system's expression in position-space, which is given by the Hamiltonian
\begin{equation}
\label{eq:po_H}
H = \sum_{\langle i,j \rangle} c_i^\dagger (\hat{t}_1)_{ij}c_j + \sum_{\langle\langle i,j \rangle\rangle} c_i^\dagger s_{ij}( ie^{i \phi}\hat{t}_2)_{ij} c_j, 
\end{equation}
where $s_{ij}=+1$ is chosen for $\mathbf{r}_i=\mathbf{r}_j+a\mathbf{y}$. We 
show the energy spectra and density of states of Eq.~\eqref{eq:po_H} under periodic (PBC) and open (OBC) boundary conditions in Fig.~\ref{fig:figsm_01}\text{(b)}, which are well matched except in the bulk band gap. Under PBC, there is a band gap corresponding to $E_{\textrm{gap}}=0.4t$, whereas under OBC, the spectrum becomes gapless due to the presence of trivial edge states ~\cite{Wu2024,Fleischmann2018,Chen2023} (see Supplementary Sec.~SII).

In Fig.~\ref{fig:figsm_0}, we show the local density of states (LDoS) calculations for two energies of the TBG model with open boundaries. The magnitude of the LDoS at each site is represented by the size of the green spheres. Unlike the LDoS at $E=t$ within the bulk bands, which is uniform across the entire system, the LDoS at $E=0$ within the band gap is localized along the system boundary. As discussed in the main text, these trivial edge states are susceptible to disorder, in contrast to those that arise from strong and stable topology such as chiral edge states in Chern insulators. Figure~\ref{fig:figsm4} shows the evolution of the LDoS at $E=0$ as the disorder strength $S$ increases (using the same definition of this parameter from the main text). Note that the localized edge states, which are uniformly distributed along the system boundary in the clean system ($S=0$), become unevenly distributed as $S$ increases. These edge states disappear as the system enters the trivial phase within a certain range of $S$, and reappear with the re-entrant fragile topological phase at strong $S$. Figure~\ref{fig:figsm5} shows the LDoS at $E=0$ for different disorder samples at $S=0.85$. As clearly shown in this result, the edge states are unevenly distributed along the boundary due to the disorder.

\begin{figure}[t ]
\center
\includegraphics[width=\columnwidth]{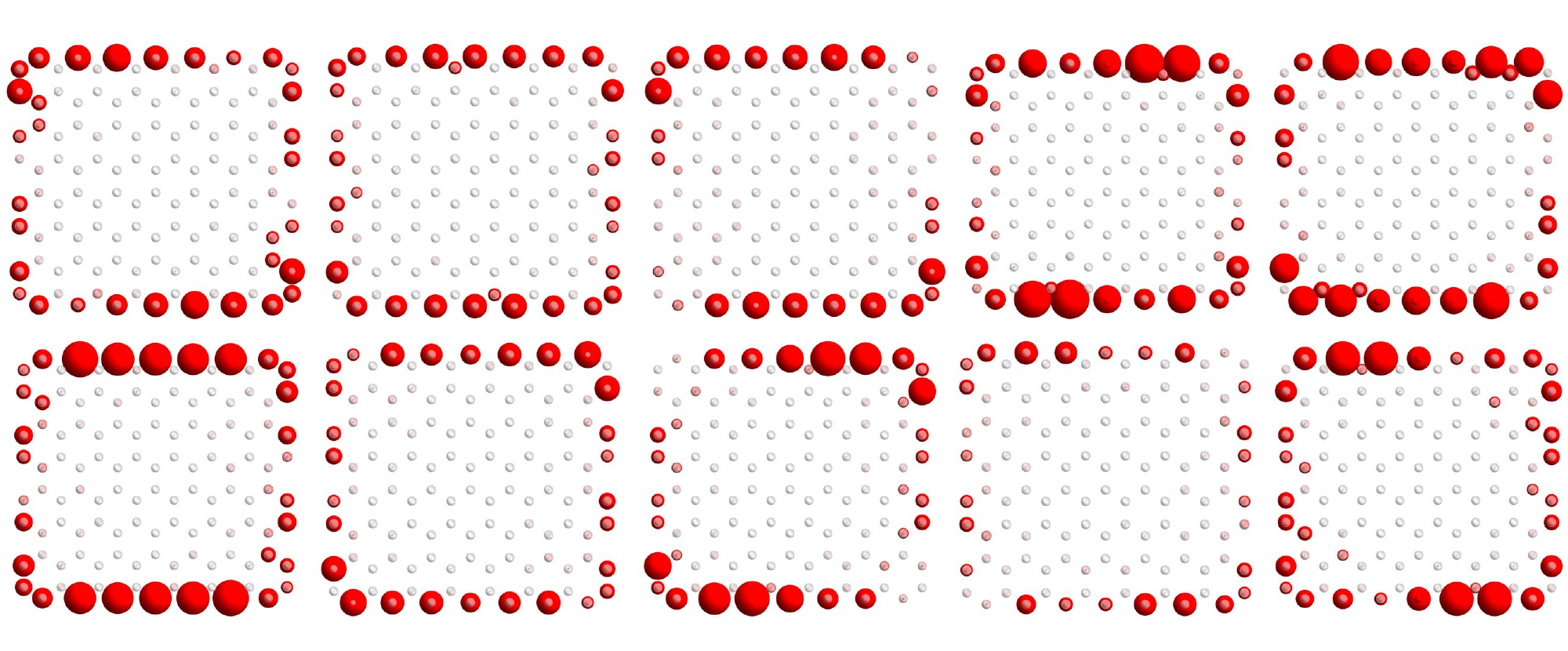}
\caption{Susceptible nature of the TBG boundary states due to the effects of disorder.
LDoS at $E$ = 0 and $S = 0.85$ for 10 different disorder samples.
}
\label{fig:figsm5}
\end{figure}

\begin{figure}[h]
\center
\includegraphics[width=\columnwidth]{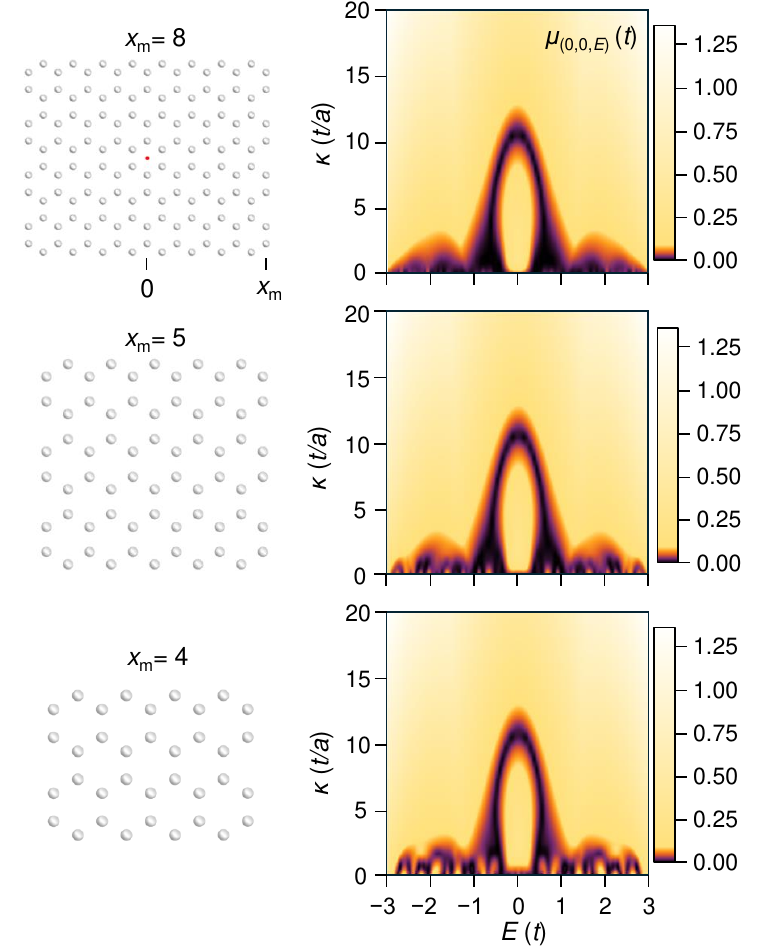}
\caption{
Consistency of the local gap in the $\kappa$-energy domain with respect to changes in total system size.
}
\label{fig:figsm1_1}
\end{figure}

\begin{figure}[t]
\center
\includegraphics[width=\columnwidth]{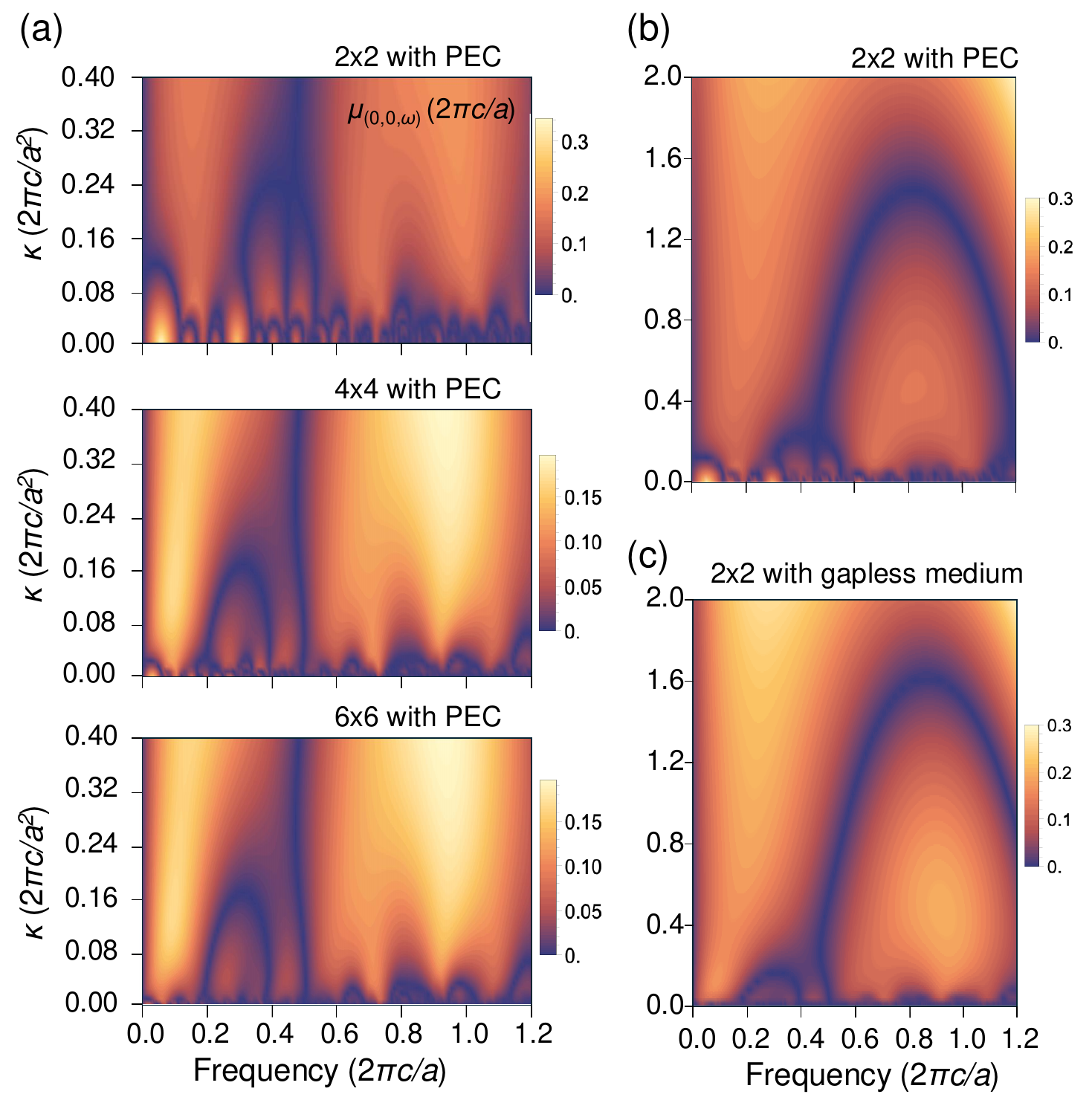}
\caption{
Consistency of the local gap behavior in the $\kappa$-frequency domain with respect to changes in total system size and the environment.
\text{(a)} Local gap $\mu$ calculation results for $2\times2$, $4\times4$, and $6\times6$ with PEC boundaries. 
\text{(b)}, \textbf{c} Large scale view of local gap $\mu$ calculation results for $2\times2$ with PEC boundary \text{(b)} and surrounded by air \text{(c)}.
}
\label{fig:figsm1_2}
\end{figure}

\section{A scaling coefficient and size dependence of a local gap}

Figures \ref{fig:figsm1_1} and \ref{fig:figsm1_2} illustrate the calculations of the local gap $\mu$ in the $\kappa$-energy domain for lattice and continuum models, respectively, indicating that these properties are consistently maintained regardless of the total system size and the surrounding environment. In Fig.~\ref{fig:figsm1_1}, the local gap is calculated within the energy range from $-3$ to $+3$ and $\kappa$ from 0 to 20$t/a$. Considering the shift in homotopy invariant upon touching $\mu = 0$, it is evident that $\mu$ robustly protects the topology near the bulk band gap at $E = 0$. This topological phase becomes trivialized after the local gap touches zero near $\kappa=12t/a$ as $\kappa$ increases. Note that the local gap behavior in the $\kappa$-$E$ domain remains consistent despite changes in the spatial size of the lattice, with $x_m=8$, $5$, and $4$. Similarly, Figure \ref{fig:figsm1_2} presents calculations of the local gap for the continuum model. Figure \ref{fig:figsm1_2}\text{(a)} shows the results for three different unit cell counts under PEC boundary conditions. Figure \ref{fig:figsm1_2}\text{(b)} demonstrates the results at larger $\kappa$ scales for the PEC and air boundaries. These results confirm the consistency of the spectral localizer framework through the large local gap values near the fragile band gap and the qualitative agreement in the $\kappa$-$\omega$ domain.

\begin{figure}[h]
\center
\includegraphics[width=\columnwidth]{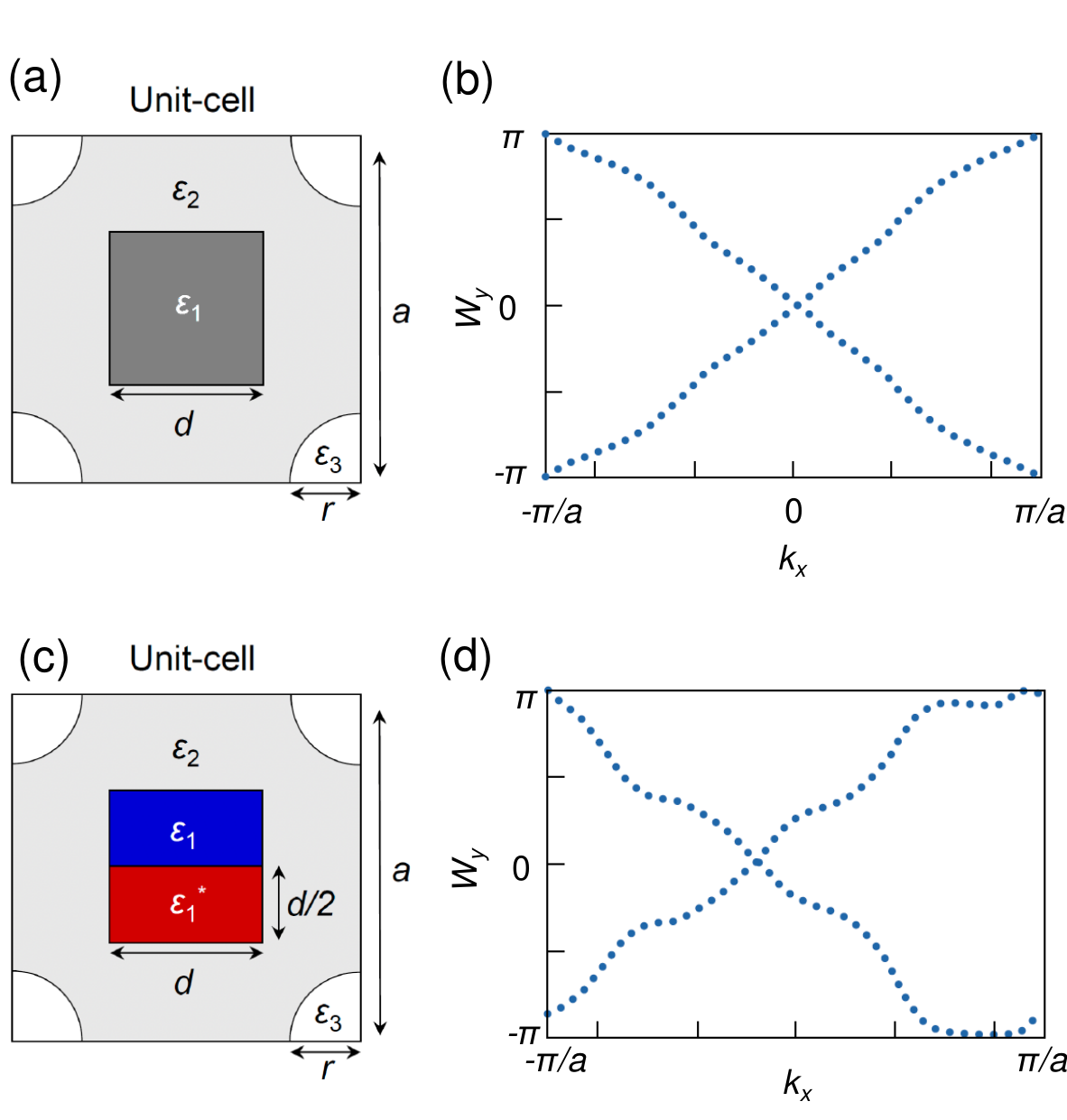}
\caption{
Wilson loop calculations for photonic crystals with fragile bands.
\text{(a)} Unit cell of the photonic crystal with both $C_2$ and $\mathcal{T}$ symmetries. Here, $\epsilon_1 = 16$, $\epsilon_2=4$ and $\epsilon_3=1$. $a$ is a period, and $r$ = 0.2$a$ and $d$ = 0.4$a$ and $G$ = $2\pi/a$.  
\text{(b)} Wilson loop eigenvalues $W_y$ for the TE-polarized bands $8+9$, plotted as a function of $k_x$. The winding of the eigenvalues indicates the non-Wannierizablility of the bands. The crossing of the eigenvalues at $k_x = 0$ and $\pi$ is enforced by $C_2$ symmetry.
\text{(c)} Unit cell of the photonic crystal with $C_2\mathcal{T}$ but with individually broken $C_2$ and $\mathcal{T}$ symmetries. Here, $\epsilon_1$ is the anisotropic permittivity tensor given in the main text, $\epsilon_2=4$ and $\epsilon_3=1$. $a$ is a period, and $r$ = 0.2$a$ and $d$ = 0.45$a$ and $G$ = $2\pi/a$.
\text{(d)} Wilson loop eigenvalues $W_y$ for the TE-polarized bands $8+9$, plotted as a function of $k_x$. Under $C_2\mathcal{T}$ symmetry, the winding in the spectrum is retained and the crossing of the eigenvalues is stable. However, the crossing can occur at arbitrary $k_x$.
}
\label{fig:figsm3}
\end{figure}

\section{Wilson loops for photonic crystals with fragile bands\label{sec:S2}}

A standard momentum-space method for characterizing band topology is to compute the Wilson loop spectrum. The Wilson loop operator over a closed loop $l$ is given by
\begin{equation}
    \mathcal{W}_{m,n} = \mathcal{P}\exp \left( i\oint_{l} \mathbf{A}_{m,n}(\mathbf{k})\cdot d\mathbf{k} \right),
\end{equation}
where $\mathbf{A}_{m,n}(\mathbf{k}) = \langle u_{m,\mathbf{k}}| i\nabla_\mathbf{k} u_{n,\mathbf{k}}\rangle$ is the Berry connection defined over the space-periodic part of the eigenstates of interest, $u_{m,\mathbf{k}}(\mathbf{r})$. The Wilson loop operator is a matrix that encodes the geometric phases traced by eigenstates along closed loops in momentum space. Specifically, the eigenvalues of $\mathcal{W}$ represent the Berry phases accumulated along the loop. A non-trivial winding in the Wilson loop eigenvalue spectrum signifies the non-Wannierizability of the bands, indicating their non-trivial topology. In the case of photonic crystals (PhCs), this calculation can be performed using the electromagnetic eigenmodes extracted from finite-element or plane-wave expansion methods~\cite{de2019tutorial, DePaz2019, Vaidya2023}.

Here, we calculate the Wilson loop spectra for the photonic crystal (PhC) structures studied in the main text. We compute the eigenvalues of $\mathcal{W}$ along non-contractible paths in momentum space with a fixed $k_y$ and plot them as a function of $k_x$. First, let us consider the PhC with the unit cell shown in Fig.~\ref{fig:figsm3}\text{(a)} with both $C_2$ and $\mathcal{T}$ symmetries. The Wilson loop spectrum for TE-polarized bands 8 and 9 is shown in Fig.~\ref{fig:figsm3}\text{(b)}. We observe a double winding of the eigenvalues in the spectrum indicating that these bands are fragile. Moreover, the crossings at $k_x = 0$ and $\pi$ are enforced by the presence of $C_2$ symmetry. In Fig.~\ref{fig:figsm3}\text{(c)} we consider the PhC studied in the main text with broken $C_2$ and $\mathcal{T}$ symmetries but with preserved $C_2\mathcal{T}$. The corresponding Wilson loop spectrum shown in Fig.~\ref{fig:figsm3}\text{(d)} continues to exhibits a double winding of the eigenvalues indicating fragile topology. However, under $C_2\mathcal{T}$, these crossings are generic and can occur at arbitrary $k_x$~\cite{Bradlyn2019, Song2019, Bouhon2019}. 

We emphasize that this momentum-space method is only valid for infinite, spatially periodic structures and fails in the presence of boundaries, gapless environments, or disorder. Furthermore, position-space symmetry indicators, which can identify fragile topology, require spatial symmetries that are absent in the structures considered above~\cite{Song2020}. In contrast, the position-space methods introduced in the main text are applicable to gapless and finite heterostructures, such as those shown in Fig.~4\text{(d)}, as well as to systems with disorder, as illustrated in Fig.~3 of the main text.

\bibliography{ref}

\end{document}